\documentclass[11pt]{article}
\usepackage[letterpaper,margin=1in]{geometry}
\usepackage[numbers,sectionbib]{natbib}
\usepackage{times}
\usepackage{hyperref}
\usepackage{amsmath,amssymb,amsthm} 
\hypersetup{
pdfborder = 0 0 0,
colorlinks=true,
citecolor=blue,
% linkcolor=blue,
% urlcolor=blue,
}
\usepackage{graphicx}
\usepackage{wrapfig,lipsum,booktabs}

% proof environments
\newtheorem{theorem}{Theorem}[section]
\newtheorem{lemma}[theorem]{Lemma}
\newtheorem{corollary}[theorem]{Corollary}
\newtheorem{definition}[theorem]{Definition}

\newenvironment{hypothesis}[1][]{\par\medskip
   \noindent \textbf{Hypothesis.} \em #1 \rmfamily}{\medskip}

 \newenvironment{conjecture}[1][]{\par\medskip
   \noindent \textbf{Conjecture.} \em #1 \rmfamily}{\medskip}

 \newenvironment{result}[1][]{\par\medskip
   \noindent \textbf{Key Result.} \em #1 \rmfamily}{\medskip}
% make footnotes reasonable size
%\usepackage{footmisc}
%\renewcommand*{\footnotelayout}{\small\sffamily}

% for sdn

\newcommand{\polylog}{\textrm{polylog}}

% Generic Space Saving
%
%\addtolength{\hoffset}{-1mm}
%\addtolength{\voffset}{-1mm}
%\addtolength{\textheight}{2mm}
%\addtolength{\textwidth}{2mm}
%\def\paragraph#1{\vspace{0.05em}\noindent {\bf #1}}
%\def\paragraph#1{\vspace{0.25em}\noindent {\bf #1}}
%
%\renewcommand{\baselinestretch}{0.955}

\begin{document} 

  \title{On Bioelectric Algorithms:\\
  A Novel Application of Theoretical Computer Science to\\
   Core Problems in Developmental Biology}
  \author{
  Seth Gilbert\\ National University of Singapore\\ {\tt \small seth.gilbert@comp.nus.edu.sg}
   \and James Maguire\\ Georgetown University\\ {\tt \small jrm346@georgetown.edu} 
   \and Calvin Newport\\ Georgetown University\\ {\tt \small cnewport@cs.georgetown.edu}
   }
  \date{}

  \maketitle

%%%%%%%%%%%%%%%%%%%%%%%%%%%%%%%%%%%%%%%%%%%%%%%%%%%%%%%%%%%%%%%%%%%%%%%%%
%%%%%%%%%%%%%%%%%%%%%%%%%%%%%%%%%%%%%%%%%%%%%%%%%%%%%%%%%%%%%%%%%%%%%%%%%
%%%%%%%%%%%%%%%%%%%%%%%%%%%%%%%%%%%%%%%%%%%%%%%%%%%%%%%%%%%%%%%%%%%%%%%%%
\begin{abstract}
Cellular bioelectricity describes the biological phenomenon in which cells in living tissue generate and maintain patterns of voltage gradients induced by differing concentrations 
of charged ions. A growing body of research suggests that bioelectric patterns represent an ancient system that plays a key role in guiding many important developmental processes
including tissue regeneration,  tumor suppression,  and embryogenesis. Understanding the relationship between high-level bioelectric patterns and low-level biochemical processes might
also enable powerful new forms of synthetic biology.
A key open question in this area is understanding {\em how} a collection of cells, interacting with each other and the extracellular environment only through
simple ligand bindings and ion fluxes, can compute non-trivial patterns and perform non-trivial information processing tasks.
The standard approach to this question is to model a given bioelectrical network as a system of differential equations and then explore its behavior using simulation techniques.
In this paper, we propose applying a computational approach. 

In more detail, we present the cellular bioelectric model (CBM), a new computational model that captures
the primary capabilities and constraints of bioelectric interactions between cells and their environment. We use this model to investigate several important topics
in cellular bioelectricity. We begin by studying the ability of an undifferentiated collection of cells to efficiently break symmetry. We prove that a simple
cell definition we call {\tt KnockBack} is remarkably effective at this task. When executed in a single hop topology (all cells can influence each other), {\tt KnockBack} elects a leader in time comparable
to the best solutions in standard computational models. When executed in a multihop topology, it efficiently stabilizes to a {maximal independent set}, even if the cells are started at arbitrary
initial ion concentrations. This latter result is important as these structures have been previously shown to play a role in the nervous system development in flies. 
We then turn our attention to the information processing ability of bioelectric cells.
We provide cell definitions that approximate solutions to the threshold detection and majority detection problems,
and prove that probabilistic solutions with non-zero error are required for these types of problems in this model.
Significantly, we then prove that when it comes to the task of 
computing a function on an input encoded into the cells's initial states,
the CBM is Turing complete.
This result helps resolve an open question about the computational power of simple bioelectric interactions.
\end{abstract}

\section{Introduction \& Related Work}
\label{sec:intro}

The planarian is an unassuming looking flatworm usually less than an inch long. These simple organisms, however, possess a remarkable ability to regenerate. If, for example, you cut off the head and tail of an unlucky planarian, it will reliably regrow both. 
The biochemical processes that drives this regeneration
start with gene regulatory networks that produce effector proteins, which then inhibit
and promote
intricate endogenous reactions.
Modern biology understand some of these individual steps,
but the process as a whole---with all of its whirling, non-linear,
and unpredictable feedback loops and influences---remains massively too complex for scientists
to usefully decode.

And yet, in a series of extraordinary experiments, systems biologists at Tufts University discovered how to
hijack the planarian regeneration process with simple {\em in vivo} interventions~\cite{durant2017long,biocode}. They can
cut off the head and tail of one of these worms and then induce it to grow two heads, or two
tails, resulting in a perfectly viable, albeit strange new worm.
They have, in some sense, learned how to direct cellular development in these worms without having to master its underlying
biochemical intricacies. 

The secret to these experiments is {\em cellular bioelectricity}: the patterns of voltage differentials caused by
differing concentrations of charged ions inside and outside of a cell's plasma membrane.
A compelling new field of cellular biology, influenced by insights from computer science, 
is revealing that these bioelectric patterns can in some cases play the role of high-level programming languages,
providing a ``biocode" that can specify goal states for cellular development that are then implemented by
complex lower-level processes (see~\cite{levin2017bioelectric} for a recent survey of this work).

In this paradigm, altering the bioelectric pattern (which can be done using interventions
such as chemical blockers that modify ion flux) is like altering the source code of a computer program,
providing a tractable mechanism for controlling how an organism develops.
The ability to manipulate these processes at this high level of abstraction
enables potentially massive breakthroughs in many different important areas of study,
including organ and limb regeneration, tumor suppression, and powerful new forms of synthetic biology.

Some of the key open questions from this research direction 
include understanding {how} cells {\em form} distinct bioelectric patterns, 
and how they {\em alter} them in response to specific environmental inputs.
As detailed below,
the standard approach to exploring these questions applies techniques from dynamical systems theory
(namely simulating systems of differential equations).
In this paper, we instead adopt a biological algorithms approach~\cite{navlakha:2014,navlakha2011algorithms} in which we describe
the systems as distributed algorithms and analyze them with tools from theoretical computer science.
To validate the potential of this strategy we provide both new insight and testable hypotheses for
important open questions from the existing cellular biology literature.

\paragraph{Bioelectric Networks as Dynamical Systems.}
A bioelectric network (BEN) describes a collection of cells along with the parameters
and mechanisms relevant to their bioelectric activity.
These networks typically include {\em ion channels}, which passively enable charged ions to flow between
cells and the extracellular environment, {\em ion pumps}, which actively pump ions against
the gradient induced across the cellular membrane, {\em gap junctions}, which provide direct
connections between cells, and {\em ligands}, which are special molecules that a cell
can release to induce changes in the bioelectric behavior of nearby cells.
Also included in a BEN description are initial concentrations of the relevant
ions (e.g., potassium, sodium, and chorine).

It is not obvious how a particular BEN will behave once its ions, pumps and ligands are allowed to interact and flow.
It is also not obvious how to design a BEN to achieve a particular goal.
With this in mind, the standard way to study these networks is with a dynamical systems approach.
The collection of parameters and mechanisms that make up a particular BEN
are described by a series of complex differential equations. 
Because it is too difficult to calculate analytical solutions to these equations,
the system is studied in simulation (e.g.,~\cite{pietak2017bioelectric,pietak2016exploring}). 
By studying many different configuration and parameters,
the researchers can gain some insights into how a particular type of BEN behaves.

\paragraph{Bioelectric Networks as Distributed Systems.}
An alternative to studying BENs as dynamical systems is the so-called {\em biological algorithms}
approach~\cite{navlakha:2014,navlakha2011algorithms}, 
which describes biological systems as collections of interacting algorithms instead of differential equations.
This approach allows researchers to apply well-established tools from theoretical computer science (and in particular, 
from distributed algorithm theory) to prove strong results about a system's behavior, identify system
designs that can solve specified problems, produce lower bounds and impossibility results,
and even assess the general computational power of the setting in question.
If the computational model that constrains the algorithm designs and dictates their interaction appropriately
abstracts the key features of the biological system it describes, these results can provide useful biological insight and generate testable hypotheses.

In this paper, we are, to the best of our knowledge, the first
to apply the biological algorithms approach to the study of bioelectric networks.
To do so, we begin in Section~\ref{sec:model}
by describing the {\em cellular bioloelectric model} (CBM),
a new computational model
that abstracts the important capabilities and constraints of cellular bioelectrical networks. 
This model assumes a collection of {\em cells} which are connected in a network topology
that describes which cell pairs can directly interact (e.g., through ligand signaling).
To simplify the model specification, time proceeds in synchronous rounds.
The state of each cell at the beginning of a round is captured by a single value that describes
the voltage {\em potential} across its plasma membrane.
A {\em gradient} parameter captures the rate at which this potential increases or decreases
toward an equilibrium in each round due to ion flux through ion channels in its membrane.

Cells can communicate and compute only through {\em bioelectric events},
in which a cell can induce a sudden increase or decrease to its potential
(e.g., by pumping ions in/out, or opening/closing ion gates),
and release ligand molecules that can induce a sudden potential changes
in its neighboring cells in the network.
For each cell, and each bioelectrical event, 
a probability function specific to that event
maps the cell's current potential to the probability of the event firing.
To maintain biological plausibility,
our model requires that these probability function are monotonic,
and allows each cell definition to include only a constant number of distinct bioelectric events.

Though the core computational process in the CBM---the {\em cell}---is quite simple and restricted,
we are able
to show that they are  well-suited to exactly the types of distributed computational
tasks that researchers now attribute to bioelectric behavior.
Below we summarize our results and emphasize the concrete connections they form to active
areas of biological inquiry.

\paragraph{Our Results: Symmetry Breaking.}
As mentioned, one of the key open problems in cellular bioelectrics is understanding
how otherwise identical cells distinguish themselves into set patterns.
We study these symmetry breaking tasks in Section~\ref{sec:sym},
focusing in particular on the {\tt KnockBack} cell definition (see Section~\ref{sec:sym:kb}).
This definition captures one of the simplest possible symmetry breaking strategies.
Cells start with a low potential that gradually increases toward a higher equilibrium.
As a cell's potential increases, it passes through a {\em competition} value range in which,
with constant probability, it fires a bioelectric event that bumps up its potential and emits a ligand
that will reduce the potential of nearby cells.
If it makes it through the competition range, its potential is now high enough that the
 cell will begin firing this event with probability $1$ until its reaches a threshold after which it can
  begin a morphological transformation into a leader.

Though simple, {\tt KnockBack} turns out to be an effective symmetry breaker.
In Section~\ref{sec:sym:le}, we study this strategy in a single hop (i.e., fully connected) network topology.
We prove that not only does it safely elect a single cell to be leader,
it does so in only $O(\log{(n/\epsilon)})$ rounds, with probability at least $1-\epsilon$,
where $n$ is the network size.
For high probability (i.e., $\epsilon < 1/n$),
this bound is {\em faster} than the $O(\log^2{n})$-round algorithm from our recent study 
of symmetry breaking with constant-size state machines~\cite{gilbert:2015}.
It also matches the 
optimal $\Theta(\log{n})$ bound on leader election with unrestricted state machines
under the comparable network
assumptions of a shared communication channel and collision detection~\cite{newport:2014}.

In Section~\ref{sec:sym:mis},
we turn our attention to the behavior of {\tt KnockBack} in connected multihop networks
that satisfy the natural unit ball graph constraints~\cite{kuhn:2005} (which requires the topology
to be compatible with the embedding of the cells in a reasonable metric space).
In this setting, we consider the {\em maximal independent set} (MIS) problem,
in which: (1) every cell must either become a leader or neighbor a leader; (2) no 
two neighbors are leaders.
Our consideration of the MIS problem is not arbitrary.
A 2011 paper appearing in the journal {\em Science}~\cite{afek:2011}
conjectures that nervous system development in flies solves the MIS problem on a layer of epithelial
cells to evenly spread out sensory bristles, motivating the investigation of biologically plausible
strategies for solving this classical problem (c.f.,~\cite{afek:2013,scott:2013}).

We show, perhaps surprisingly, that the simple {\tt KnockBack} strategy turns out to provide
an effective solution to the MIS problem as well.
In more detail, we prove that with high probability in the network size $n$,
it establishes a valid MIS in at most $O(\text{polylog}(\Delta)\log{n})$ rounds,
where $\Delta$ is the maximum degree in the network (which in many biological settings,
such as in~\cite{afek:2011}, it is likely a small constant).

Equally important for the study of bioelectrics, we show this strategy to be self-stabilizing. Even if you start each cell at an
arbitrary initial potential, the system will efficiently stabilize back to a valid MIS.
To the best of our knowledge, {\tt KnockBack} in the CBM is the first efficient MIS solution for a bio-plausible 
or bio-inspired model that is self-stabilizing\footnote{MIS algorithms in the LOCAL model (e.g., ~\cite{Schneider:2008} are often in fact self-stabilizing, even if they are not always explicitly described as stabilizing.}, and is unique in requiring only a single probability value (as opposed to 
the $\log{n}$ distinct probabilities used in most existing efficient solutions, including those proposed in~\cite{afek:2011,afek:2013,scott:2013}). 
Given these powerful properties of the {\tt KnockBack} strategy,
plus a simplicity in design that makes it an easy target for natural selection to identify, 
we make the following hypothesis:

\begin{hypothesis}
The MIS generation in fly nervous system development is potentially driven by robust bioelectric interactions
instead of the more traditional chemical signaling suggested in previous work.
\end{hypothesis}

\noindent This hypothesis is likely testable by using standard fluorescing techniques that help visualize voltage gradients in living
tissue, though we note adapting these tests to flies might require non-trivial lab innovation.

\paragraph{Our Results: Information Processing.}
The other previously mentioned key open problem in cellular bioelectrics is understanding
the capacity of cells to process information using bioelectric interactions.
One conjecture is that simple interactions of the type captured in the CBM are not capable
of much more than simple pattern generation (e.g., generating an MIS with simple {\tt KnockBack}
cells). A competing conjecture is that these interactions are actually capable of performing a wide variety
of non-trivial computation.

In this paper, we use the CBM to provide support for the latter view of biological reality.
We begin in Section~\ref{sec:type} by studying input type computation,
a simple form of information processing also studied in the biologically-plausible population protocol and chemical reaction network 
models (see model comparison below).
In input type computation, the goal is to compute an output based on the {\em number} of cells in the system
of one or more designated types. Two classical problems of this type are {\em threshold detection}~\cite{AngluinADFP06},
which computes whether the number of {\em sick} cells in the system is beyond a fixed threshold $k$,
and {\em majority detection}~\cite{AngluinAE08}, which computes whether there are more $A$ cells than $B$ cells in the system.

We study threshold detection in Section~\ref{sec:type:threshold}.
For small thresholds, we present a simple cell definition that solves the problem exactly with no error probability.\footnote{In
 this context, ``small" means that $k$ is smaller than the maximum number of different ligand counts
a cell can can distinguish, allowing the cell to directly count the sick cells (see the definition of 
{\em binding bound} from Section~\ref{sec:model}).}
For larger thresholds, we present a cell definition that for any error bound $\epsilon$,
correctly detects the threshold is exceeded if the count $n$ is greater than $k\tau$,
and correctly detects it is not exceeded if $n < k/\tau$, for $\tau = O(\log{(1/\epsilon)})$.
We conclude by proving that {\em any} solution to this problem that works for general $k$ values
must have a non-zero error probability, regardless of how large we allow $\tau$ to grow.

In Section~\ref{sec:type:majority},
we turn our attention to majority detection.
We provide symmetric cell definitions for type $A$ and $B$ cells.
 For any constant error bound $\epsilon >0$,
these cells will correctly detect the majority type with probability $1-\epsilon$
so long as there are a constant factor more of the majority type
(for a constant factor defined relative to $\ln{(1/\epsilon)}$).

The general threshold detection solution is straightforward: cells send a 
ligand with probability $1/k$, and associate any received ligands with an exceeded threshold.
The majority detection solution has cells increase the firing probability of a bioelectric event from a small lower bound
to a constant as their potential increases towards equilibrium: whichever cell type fires first is assume the majority type.
In both cases, more refined probabilistic analysis would likely lead to tighter bounds,
but the solutions and lower bound in Section~\ref{sec:type} are sufficient to generate the following
conjecture:

\begin{conjecture}
Bioelectric cells can approximate many of the standard input type computational problems,
but will require non-zero error probabilities to do so.
\end{conjecture}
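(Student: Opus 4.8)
Because the final statement is posed as a conjecture rather than a theorem, the plan is not to give a single deductive proof but to substantiate both of its clauses through the representative problems that the preceding text sets up. The conjecture bundles a positive claim (bioelectric cells can approximate standard input-type problems) with a negative one (they cannot do so with zero error). The positive clause I would establish constructively, by exhibiting and analyzing cell definitions for the two named problems, threshold detection and majority detection. The negative clause I would establish by an impossibility argument showing that exact, error-free detection is unachievable for general $k$ in the CBM. The honest target is therefore a matching pair of results---a tunable approximation together with an exactness barrier---whose shared shape is the real content of the conjecture.

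For the positive clause I would analyze the two natural constructions already described. For threshold detection, let each sick cell emit a ligand with probability $1/k$ per round and let a deciding cell base its output on whether it registers ligand reception. The probability that an observer sees no ligand in a round is $(1-1/k)^n \approx e^{-n/k}$, so the multiplicative gap between the regimes $n > k\tau$ and $n < k/\tau$ separates the per-round reception probability; accumulating this signal over a short window and applying a Chernoff bound converts the gap into a $1-\epsilon$ correctness guarantee with the stated dependence on $\tau$ and $\epsilon$, with a union bound controlling the finitely many deciding cells. For majority detection I would model the symmetric $A$/$B$ constructions as a first-to-fire race: as potentials ramp toward equilibrium each cell's firing probability rises from a small floor to a constant, so the type with a numerical advantage crosses the firing threshold first, and a Chernoff estimate over the competition window turns a large-enough constant-factor advantage (a constant defined relative to $\ln(1/\epsilon)$) into the claimed $1-\epsilon$ bound.

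The main obstacle is the negative clause: proving that no cell definition detects the threshold exactly with zero error for general $k$. Here I would use an indistinguishability/coupling argument across the two boundary inputs, $n = k$ and $n = k-1$, which a correct solution must separate. The goal is to exhibit a single observable transcript---the sequence of potentials and received ligand counts seen by every deciding cell---that occurs with positive probability under \emph{both} inputs, forcing any decision rule to err on one. The coupling masks the extra sick cell: because the CBM restricts each cell to a constant number of bioelectric events whose firing probabilities are monotone functions of potential, and because cells sense ligands only up to the finite binding bound, once $k$ exceeds that bound a single additional low-probability contribution cannot be reliably registered, so with positive probability the extra cell's observable behavior coincides with that of an absent one. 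The delicate part is carrying out this masking simultaneously across all cells and all rounds while respecting the synchronous update and monotonicity constraints, so that the two executions are genuinely identical on the transcript rather than merely close; a bound of the form $\text{error} \ge (1 - d_{\mathrm{TV}})/2 > 0$ then finishes it, and is insensitive to $\tau$ precisely because enlarging the gap does nothing to separate the adjacent counts $k-1$ and $k$. I would close by stressing that these two problems are evidence, not a full proof: they display the approximation-plus-barrier pattern that motivates the broad conjecture, while a fully general statement would require a uniform approximation scheme and a uniform impossibility reduction that remain open.
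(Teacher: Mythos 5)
Your framing of the conjecture as a pair of constructive approximations plus an impossibility barrier matches the paper exactly, and your sketches of the positive clause (the $1/k$-firing threshold detector analyzed by Chernoff bounds, and the first-to-fire race for majority) are essentially the paper's {\tt GeneralThreshold} and {\tt MajorityA}/{\tt MajorityB} arguments. The gap is in the negative clause. You propose coupling the adjacent inputs $n=k$ and $n=k-1$, but under the paper's $(k,\tau,\epsilon)$ formulation a correct solution is only obligated to behave differently when $n>\tau k$ versus $n\le k/\tau$; for any $\tau>1$ both $k$ and $k-1$ sit in the ``don't care'' gap, so their indistinguishability implies nothing. Your closing remark that the argument is ``insensitive to $\tau$'' because widening the gap does not separate $k-1$ from $k$ has this exactly backwards: widening the gap removes the obligation to separate them. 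The paper's lower bound instead picks two inputs on \emph{opposite} sides of the required separation --- some $n>\tau k$ and $n=k/\tau-1$ --- and chooses $k>\tau(b+1)$ so that both counts exceed the binding bound $b$; then whenever every cell fires, every membrane function sees the saturated ``$b$ or more'' case in both executions, and the two systems evolve identically. That binding-bound saturation on multiplicatively far-apart inputs is the idea your proposal is missing.

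The second missing idea is how to handle probabilistic cells. You flag as ``the delicate part'' the task of masking the extra cell simultaneously across all cells and rounds; the paper never attempts this. It first proves the impossibility for \emph{deterministic} cells (where all cells stay in lockstep, so the saturation argument is immediate), and then reduces zero-error probabilistic cells to deterministic ones: if $\epsilon=0$ then \emph{every} positive-probability execution must be correct, and one such execution is the one in which every event with non-zero firing probability fires; that execution coincides with the deterministic cell obtained by rounding all positive probabilities up to $1$ (which remains a legal monotone firing function), so the deterministic impossibility applies. This reduction is what makes the zero-error case clean; a total-variation bound of the form $(1-d_{\mathrm{TV}})/2$ is not needed and would be hard to instantiate, since you would have to exhibit a common transcript with positive probability under both inputs for an arbitrary cell definition, which is precisely the step you leave open.
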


In Section~\ref{sec:tm} we consider a more general form of information processing,
in which the input value to be processed in a given execution is encoded in the initial
value of one or more designated input cells (for some encoding scheme specified by 
the designer of the cellular system).
%This setting captures the general ability of bioelectric cells to compute.
Understanding the set of functions that  can be computed by such systems
provides insights into the computational power of bioelectrics.
With this motivation in mind, we prove the following strong result:

\begin{result}
Bioelectric cells are Turing complete.
\end{result}

In slightly more detail, we prove that for any deterministic Turing machine (TM) $M$,
there exists a finite collection of cells including a designated {\em input cell},
connected in a single hop network,
such that for any TM input $w$,
 if you set the initial potential value of the input cell to a proper encoding of $w$,
 the system will correctly simulate $M$ on $w$.
 Of course, one of the TMs that can be simulated is a universal TM, 
 indicating the existence of a computationally universal collection of bioelectric cells.
 
 Our strategy for this result is to first apply a result due to Minsky~\cite{minsky1967}
 to convert $M$ into an equivalent counter machine in which the TM input is encoded
 into the initial count of a designated counter.
 We describe a novel strategy for simulating the machine using cells in the CBM.
 We dedicate one cell for each simulated counter, where the cell potential corresponds
 to the counter value, one cell for each state of the finite state machine capturing
 the counter machine control logic, and one cell for each machine transition.
The cells interact using ligands in a fine-tuned pattern that correctly computes the next state
and correctly updates the counter values.
We emphasize the simplicity of the cells used in this simulation: no cell definition includes more
than two bioelectric events or requires a cell to react to more than two ligand types; all bioelectric
events are controlled by deterministic step functions.
 
The goal with this result, of course, is not to imply that actual biological systems
are implementing TM simulations of this style (counter machine simulations are inefficient). It is instead meant to help resolve
the open question about the theoretical ability of simple bioelectric interactions to implement
complex computations.
 
 \paragraph{Comparison to Existing Models.}
Generally speaking, in studying the intersection of biology and algorithms there are two main types
of computational models used: those with {\em bio-plausible computation} and those with 
{\em bio-plausible constraints}.
The first category describes models in which 
the actual method of computation is motivated by a specific biological context.
Algorithms in these models cannot simply be described in standard pseudocode
or state machine descriptions. They must instead be specified in terms of the
particular bio-plausible computation method captured by the model.
Well-known models of this type include neural 
networks~\cite{rosenblatt1958perceptron,siegelmann1991turing,maass1997networks,lynch:2017},
 chemical reaction 
 networks~\cite{soloveichik2008computation,chen2014deterministic,chen2017speed}, and population protocols~\cite{AngluinADFP06,AngluinAE08a,AngluinAE08,AngluinAER07,AngluinAE06,ChatzigiannakisS08}
  (which
 are computationally equivalent to certain types of chemical reaction networks).
 
 The other type of model used to study biological algorithms are those with bio-plausible
 constraints. These models describe computation with the same standard discrete state machine
 formalisms assumed in digital computers. They constrain algorithms, however,
 by adding biologically-motivated limits on parameters such as memory size, 
 the message alphabets used for communication, and the behavior of the communication channels.
 Well-known models of this type includes the ANTS  
 model~\cite{feinerman2012collaborative,lenzen2014trade,cornejo2014task,ghaffari2015distributed,shiloni2011robot},
 the stone age computing model~\cite{Emek:2013}, and the beeping 
 model~\cite{degesys:2007,degesys:2008,MotskinRSG09,CornejoK10,AfekABCHK11,ScottJ013,AfekABCHK13,gilbert:2015}.
 
 Both models are useful for applying algorithmic tools to understanding biological systems.
 The bio-plausible computation models focus more on understanding the low level processes behind
 particular behaviors, while the bio-plausible constraints models focus more on identifying general distributed
 strategies, and understanding the minimum
 resources/assumptions required for useful distributed coordination.
 
 The CBM is most accurately categorized as a bio-plausible computation model.
 Existing studies of the stone age and beeps computing models already shed light
 on what can be computed by collections of simple state machines with basic signaling capabilities.
 The goal of our work is to understand what can be computed with the {\em specific} bioelectric
 mechanisms implemented in living tissue. This goal is important as our work is designed
 to be relevant to system biologists that are studying and manipulating these specific mechanisms.

 A key factor differentiating the CBM from existing models is that a cell is not a state machine (e.g., unlike in beeping models, stone age models, ANTS models, and population protocols).  In fact, a cell in our model is {\em computationally incomparable} with a traditional state 
 machine.  Consider a basic task such as outputting a repeated pattern: \texttt{ABCABCABC\ldots}.  This is trivial to implement with a discrete state machine: cycle through three states,
 one dedicated to each symbol. It is not hard to show, however, that this behavior cannot be implemented by
 a cell in the CBM.  The key difficulty is the required monotonicity for firing functions driving bioelectric events (which is an important property of the real biological cells being modelled). A simple argument establishes that for any cell, there must be two symbols $S_1,S_2 \in \{A,B,C\}$,
such that whenever $S_1$ has a non-zero probability of being output, so does $S_2$, and so you will not get a perfectly repeating pattern.  At the same time, we cannot necessarily simulate a cell with a finite state automaton, since each cell stores an analog potential value.   It is therefore unclear how to use an existing bio-plausible constraint model to directly explore bioelectric dynamics.%---as in the case of neural and chemical reaction networks, directly modeling these dynamics seems necessary.
 
 Finally, we note there are interesting connections between the CBM and neural network models.
 The action potential that drives neural computation is itself a bioelectric behavior.
 Indeed, many of common neural network models can be understood
 as special cases of our model in which the ``algorithm" designer gets to specify the network structure, 
 and the probabilistic firing functions for the bioelectric event describing the action potential must be of
 a specified form (e.g., a step or sigmoid).
 It is possible, therefore, that the CBM might have a role to play in better understanding neural computation,
 Un this paper, we leverage its more general definition of a CBM to study bioelectric behavior in 
the non-neural context.

\section{The Cellular Bioelectric Model}
\label{sec:model}

Here we define the cellular bioelectric model (CBM),
a synchronous computation model that abstracts the key capabilities and constraints
of bioelectric networks (see Section~\ref{sec:intro} for a more detailed discussion and motivation of these networks and their behavior).
We begin by defining the main computational unit of this the model, the cell, and then describe executions of systems of cells connected by a network topology.
We conclude by defining some additional biology-inspired constraints and  capabilities.

\paragraph{Cells.}
Fix a non-empty and finite set $L$ containing the {\em ligands} cells use to drive bioelectrical interactions.
We define a {\em bioelectric event} to be a pair $(f,(\delta,s))$,
where $f:{\mathbb R} \rightarrow [0,1]$ is a {\em firing function} from real numbers to probabilities,
and $(\delta, s)$ consists of a {\em potential offset} value $\delta \in {\mathbb R}$,
and a {\em ligand} $s\in L$.
We also define a {\em membrane function} to be a function $g$
from multisets defined over $L$ to real numbers.

Pulling together these pieces, a {\em cell} in our model is described by a $6$-tuple $(q_0, \sigma, \lambda, \omega, g, {\cal B})$,
where $q_0\in {\mathbb R}$ is the {\em initial potential} value of the cell,
$\sigma \in {\mathbb R}$ is the {\em equilibrium} potential that the cell will drive its internal potential toward (i.e., through
ion flux),
$\lambda\in {\mathbb R}^{+}$ is a non-negative real number describing
the {\em gradient} rate at which the cell's potential moves toward $\sigma$,
$\omega \in {\mathbb R}$ is the smallest possible potential for the cell,
$g$ is a membrane function, and ${\cal B}$ is a set of bioelectric events. 
For a given cell c, we use the notation $c.q_0, q.\sigma, q.\lambda, c.\omega, c.g, c.{\cal B}$ to 
refer to these six elements of the cell's tuple.

\paragraph{Systems and Executions.}
A {\em system} in our model consists of a non-empty set ${\cal C}$ of $n=|{\cal C}|$ cells,
 an undirected graph $G=(V,E)$ with $|V| = n$,
 and a bijection $i: {\cal C} \rightarrow V$ from cells to graph vertices.
 For simplicity,
 in the following we sometimes use the terms {\em cell $u$} or {\em node $u$}, for some $u\in V$,
 to refer to the unique cell $c\in {\cal C}$ such that $i(c) = u$.

An {\em execution} in our model proceeds in synchronous rounds that we label $1,2,3,...$.
%At the beginning of round $1$,
%the system begins in the initial configuration.
At the beginning of each round $r$,
we define the {\em configuration} $C_r: {\cal C} \rightarrow {\mathbb R}$ as the bijection from cells to their potential
values  at the beginning of round $r$.
For each $c\in {\cal C}$, $C_1(c) = c.q_0$. That is, each cell starts with the initial potential value provided as part of its definition.
The configuration for each round $r>1$ will depend on the configuration
at the start of round $r-1$, and the (potentially probabilistic) behavior of the cells during round $r-1$.

\bigskip

\noindent In more detail, each round $r\geq 1$ proceeds as follows:

\begin{enumerate}

	\item For each cell $c\in {\cal C}$, initialize $p_c \gets C_r(c)$ to $c$'s potential at the start of round $r$.
	We will use $p_c$ to track how $c$'s potential value changes during this round.
	Also initialize multiset $M_c=\emptyset$. We will use $M$ to collect ligands sent toward $c$  during this round.
	
	\item For each cell $c\in {\cal C}$,
	and each bioelectric event $(f,(\delta,s)) \in c.{\cal B}$,
		this event {\em fires} with probability $f(C_r(c))$.
		If the event fires, update $p_c \gets p_c + \delta$ and add a copy of $s$ to multiset $M_{c'}$,
		for each cell $c'\in {\cal C}$ such that $\{ i(c), i(c')\} \in E$ 
		(that is, for each cell $c'$ that neighbors $c$ in $G$).
		
	\item After processing all rules at all cells, the round proceeds by having
	cells process their incoming ligands.
	For each cell $c\in {\cal C}$,  update $p_c \gets p_c + c.g(M_c)$.
	That is, update the potential change according to $c$'s membrane function applied to its incoming ligands.
	
	\item Finally, we calculate the impact of the gradient driving each cell $c$'s potential toward
	its equilibrium value. 
	In more detail, let $z = C_r(c) - c.\sigma$. We define the gradient-driven potential change
	for $c$ in round $r$, denoted $\lambda_r(c)$, as follows:
	
	 \[ \lambda_r(c)  \gets \begin{cases}
	 	-c.\lambda & \text{if $z \geq c.\lambda$} \\
		-z & \text{if $0 < z < c.\lambda$} \\
		0 & \text{if $z = 0$} \\
		z & \text{if $-c.\lambda < z < 0$} \\
		c.\lambda & \text{if $z \leq -c.\lambda$}
	 \end{cases} \]
	
	We add this gradient-induced offset to $c$'s potential: ,$p_c \gets p_c + \lambda_r(c)$.
	
	\item The final step is to the initial potential for $r+1$ for each $c\in {\cal C}$,
	by performing a final check that
	the potential did not fall below the cell's lower bound in the round:
	  $C_{r+1}(c) \gets \max\{p_c, c.\omega\}$.

\end{enumerate}

\paragraph{Natural Constraints on Cell Definitions.}
To maintain biological plausibility, our model includes the following
natural constraints on allowable cell definitions:

\begin{itemize}

	\item {\em Constraint \#1:} Each cell definition includes at most a constant number
	of bioelectric events. 
	\item {\em Constraint \#2:} Firing functions are monotonic.
	\item {\em Constraint \#3:} For each membrane function $g$,
	there must exist some constant $b>0$, 
	such that for every possible ligand multiset $M$, $g(M) = g(\hat M)$,
	where $\hat M$ is the same as $M$ except every value that appears {\em more}
	than $b$ times in $M$ is replaced by
	exactly $b$ copies of the value in $\hat M$. 
	We call the value $b$ the {\em binding bound} for that cell definition.
	%In other words, for some constant bound $b$,
	%the membrane function can differentiate, between $0$, $1$, ..., {\em $b$ or more} copies
	%of a ligand arriving in a given round, but cannot distinguish beyond that threshold.
	
	%At some point, in other words, the places where a given signal can bind to a membrane are saturated,
	%and pouring more signals into the inter-cellular space cannot further affect the impact on the cell.		
	%To capture this reality, no cell definition can differentiate between more than a constant number of distinct incoming signal counts.	
	
\end{itemize}

\paragraph{Expression Events \& Thresholds.}
In real biological systems,
bioelectric patterns induce morphological changes driven by lower-level processes.
To capture this transformation we introduce the notion of {\em expression events}
into our model (named for the idea that bioelectics regulates gene {\em expression}).

In more detail, some of our problem definitions specify a potential threshold such
that if a cell's potential exceeds this threshold, an irreversible morphological transformations begins.
This check occurs at the beginning of each round. That is, if a cell begins round $r$ with a potential
value that exceeds the event threshold, we apply the event.
For example, in studying leader election (see Section~\ref{sec:sym}),
we assume once a cell passes a given threshold value with its potential it transforms into a {\em leader},
at which point it stops executing its original definition and transforms neighbors that have potential values
below the threshold into {\em non-leaders}.
The specification and motivation for specific expression thresholds are included as part of the problem definitions
for the problems studied in this model.

\section{Symmetry Breaking}
\label{sec:sym}

A fundamental task in bioelectric networks is generating non-trivial bioelectric
patterns that can then direct  cellular development. 
This requires symmetry breaking.
With this in mind,
as detailed and motivated in Section~\ref{sec:intro},
we study here the symmetry breaking capabilities of a simple but surprisingly effective
cell definition called {\tt KnockBack}.
We define this cell in Section~\ref{sec:sym:kb},
then study its ability to elect a leader in single hop networks in Section~\ref{sec:sym:le},
and study  its ability to efficiently generate maximal independent sets in multihop networks in Section~\ref{sec:sym:mis}.

%%%%%%%%%%%%%%%%

\subsection{The {\tt KnockBack} Cell Definition}
\label{sec:sym:kb}

We define a {\tt KnockBack} cell as follows:

\begin{center}
\begin{tabular}{|cc|}
\hline
\multicolumn{2}{|c|}{{\tt KnockBack} cell definition} \\
\hline
\hline
$q_0 = 0$ & ${\cal B} = \{(f, (1/2, m))\}$, where: \\
$\lambda = 1/2$, $\sigma=2$, $\omega=-2$ & \multicolumn{1}{r|}{$f(x < 1/2) = 0$}\\
%\begin{itemize}
%	\item $f(x < 1/2) = 0$
%	\item $f(1/2 \leq x < 1) = 1/2$
%	\item $f(x \geq 1) = 1$
%\end{itemize} 
%
$g(|M| > 0) = -(3/2)$ & \multicolumn{1}{r|}{$f(1/2 \leq x < 1) = 1/2$}\\
$g(|M| = 0) = 0$ & \multicolumn{1}{r|}{$f(x \geq 1) = 1$}\\
\hline
\multicolumn{2}{|c|}{leader expression rule threshold: $\geq 2$}\\
\hline
\end{tabular}
\end{center}

The {\tt KnockBack} cell implements an obvious symmetry breaking strategy.
It is initialized with a low initial value of $q_0 = 0$ that is driven toward the equilibrium of $\sigma = 2$
at a gradient rate of $\lambda = 1/2$.
As a cell's potential value passes through the range of $[1/2,1)$,
its single bioelectric event $(f, (1/2, m))$ fires with constant probability.
If this event fires, the cell increases its potential by $1/2$ (e.g., by pumping in more ions), 
and emits the ligand $m$, which will bind with its neighbors in the network.
If at least one of the cell's neighbor emits the ligand $m$, then that cell will decrease its potential by $-(3/2)$ (e.g., by pumping out ions).

If a cell makes it to a potential value of $1$ or greater,
this event starts firing with probability $1$.
If a cell makes it to potential value of  $2$ or greater,
it executes the {\em leader expression event}, which makes it a leader,
and makes each neighbor below the threshold into non-leaders.

As will be elaborated in the analyses that follow,
this cell definition prevents neighbors from both becoming leaders because any cell that becomes a leader in some round $r+1$,
must have spent round $r$ at a potential value where it fires its bioelectric event with probability $1$.
If two neighbors fire this event during $r$, however, they both would have ended up with a net decrease in their potential, preventing
them from becoming leaders in $r+1$.

The time required for a leader to emerge is more complicated to derive, especially in the multihop context.
The intuition behind these analyses, however, is that when multiple nearby cells simultaneously have potential
values in the {\em competition range} of $[1/2, 1)$,
it is likely that some will fire their event and some will not.
Those that do fire their event end up with a smaller net decrease in their potential than those that did not---reducing
the set of cells at the front of the pack competing to make it past the competition range.

%%%%%%%%%%%%%%%
%%%%%%%%%%%%%%%%%%%%%%%%%%%%
%%%%%%%%%%%%%%%%%%%%%%%%%%%%
%%%%%%%%%%%%%%%%%%%%%%%%%%%%
%%%%%%%%%%%%%%%%%%%%%%%%%%%%
%%%%%%%%%%%%%%%%%%%%%%%%%%%%
%%%%%%%%%%%%%%%%%%%%%%%%%%%%
\subsection{Single Hop Leader Election}
\label{sec:sym:le}

Consider a single hop (i.e., fully-connected) network  consisting of $n>0$ copies of the 
{\tt KnockBack} cell defined in Section~\ref{sec:sym:kb}.
We study the ability of this system to solve the leader election problem,
which requires the system to converge to a state in which one cell is a leader and all other cells are non-leaders.
We prove that the system never elects more than one leader,
and the for any error probability $\epsilon >0$, 
with probability at least $1-\epsilon$ it elects a leader in $O(\log{(n/\epsilon)})$ rounds.
As we detail in Section~\ref{sec:intro},
this round complexity is comparable to the best-known solutions in more powerful computational models.

\paragraph{Preliminaries.}
We begin our analysis with  some useful notation and assumptions.
First, if a cell $c$ passes the threshold of $2$ needed to trigger the leader expression event, we say it is {\em elected leader}.
Any cell $c'$ that has a potential below $2$ at the start of a round in which some other cell $c$ is elected leader,
becomes a {\em non-leader}. We assume once a cell becomes a leader or non-leader it stops executing
{\tt KnockBack} and becomes quiescent.
We say the system is {\em active} in round $r$ if no cell has been elected a leader in the first $r-1$ rounds,
and no cell start starts round $r$ with a potential large enough to for it to be elected leader.

We next note that the amount a cell $c$'s potential changes in some active round $r$ is entirely
determined by two factors: whether or not $c$ sends a ligand, and whether or $u$ receives at least one ligand.
The following table captures the net change to $c$'s potential in a given active round based on the combination
of these two factors. Notice that included in this net potential is $c$'s positive gradient increase of $1/2$,
the increase of $1/2$ that happens if it sends, and the decrease of $-3/2$ that happens if it receives:

\bigskip

\begin{center}
\begin{tabular}{|c|c|c|}
\hline
                    & {\bf send} & {\bf no send} \\
\hline
{\bf receive} &  $-1/2$ & $-1$ \\
\hline
{\bf no receive} & $1$ & $1/2$ \\
\hline
\end{tabular}
\end{center}

\bigskip

\noindent We call the above the {\em net potential table.} Differentiating the cases captured in this table will be useful in the analysis that follows.

We now introduce some useful terminology for tracking the maximum potential values in the system from round to round.
For the sake of completeness in the following definitions, we adopt the convention that if a cell $c$ becomes a leader or non-leader in round $r$,
then its potential value freezes at $C_r(c)$ for the remainder of the execution; i.e., $\forall r' \geq r: C_{r'}(c) = C_r(c)$. 

\begin{definition}
For every round $r \geq 1$, we define:
\begin{itemize}
	\item $\hat p(r) = \max_{c\in {\cal C}}\{C_r(c)\}$
	\item $A(r) = \{ c\in {\cal C}: C_r(c) = \hat p(r)\}$
	\item $B(r) = \{c' \in {\cal C}:C_r(c') < \hat p(r)\}$
	\item $send(r)= \{c\in {\cal C}: \text{$c$ sends a ligand in round $r$} \}$
	
\end{itemize}
\end{definition}

Finally, we identify the special case of a round in which cells decide to send probabilistically.
These are the rounds that can advance a system toward a leader.

\begin{definition}
We define an round $r$ to be a {\em competition round}, if and only if $\hat p(r) = 1/2$.
\end{definition}

\paragraph{Proof of Safety.}
We now prove that at most one cell becomes leader.

\begin{theorem}
A single hop network comprised of {\tt KnockBack} cells never elects more than one leader.
\label{thm:le:safety}
\end{theorem}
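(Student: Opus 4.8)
The plan is to localize the entire argument to the \emph{first} round in which any cell is elected leader. Let $r^*$ denote the first round with $\hat p(r^*) \geq 2$, i.e., the first round at whose start some cell has potential at least the leader threshold. The key structural point I would exploit is the single hop topology: the instant one cell crosses the threshold, it is a neighbor of every other cell, so it immediately converts all cells still below $2$ into non-leaders, after which no cell executes {\tt KnockBack} any longer. Hence the total number of leaders the system ever produces equals exactly $|A(r^*)| = |\{c : C_{r^*}(c) \geq 2\}|$, and it suffices to prove that at most one cell sits at potential $\geq 2$ at the start of round $r^*$.

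To establish this, I would argue by contradiction: suppose two distinct cells $c_1, c_2$ both satisfy $C_{r^*}(c_i) \geq 2$. First I would note $r^* \geq 2$, since every cell starts at $q_0 = 0$ and cannot reach $2$ by the start of round $1$; thus round $r^*-1$ exists, and by minimality of $r^*$ it is an \emph{active} round. This lets me invoke the net potential table of the preceding subsection, whose largest single-round entry is $+1$. Consequently each candidate must have entered round $r^*-1$ already high: $C_{r^*-1}(c_i) \geq C_{r^*}(c_i) - 1 \geq 1$.

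The next step uses monotonicity of the firing function together with the topology. Since $C_{r^*-1}(c_i) \geq 1$, the firing function gives $f(C_{r^*-1}(c_i)) = 1$, so both $c_1$ and $c_2$ send the ligand $m$ with certainty during round $r^*-1$. Because the network is single hop, $c_1$ and $c_2$ are neighbors, so each receives a copy of $m$ from the other; both therefore land in the \textbf{send}/\textbf{receive} cell of the net potential table, whose net change is $-1/2$. Thus $C_{r^*}(c_i) = C_{r^*-1}(c_i) - 1/2$, and since minimality of $r^*$ forces $C_{r^*-1}(c_i) < 2$, I get $C_{r^*}(c_i) < 3/2 < 2$, contradicting $C_{r^*}(c_i) \geq 2$. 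This contradiction yields $|A(r^*)| \leq 1$ and hence the theorem.

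The main obstacle I anticipate is not the arithmetic but the bookkeeping needed to certify that round $r^*-1$ is genuinely active, so that the net potential table (and in particular the mutual-reception entry) is actually the governing dynamics rather than the frozen leader/non-leader convention. Once activity is pinned down, the chain ``threshold at $r^*$ $\Rightarrow$ potential $\geq 1$ at $r^*-1$ $\Rightarrow$ deterministic send $\Rightarrow$ mutual receive $\Rightarrow$ net $-1/2$'' closes cleanly; the only care required is to keep the strict inequality $C_{r^*-1}(c_i) < 2$ (from minimality) distinct from the weak inequality $C_{r^*}(c_i) \geq 2$ so the final comparison is decisive.
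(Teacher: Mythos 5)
Your proposal is correct and follows essentially the same route as the paper's proof: identify the first round in which a cell reaches potential $2$, observe that any candidate must have begun the previous (active) round at potential at least $1$ and hence sent deterministically, so two candidates would mutually receive and each incur the net $-1/2$ from the send/receive entry of the net potential table, contradicting their reaching the threshold. Your explicit justification that two leaders could only arise in the same round (via the single-hop expression event) is a small bookkeeping point the paper leaves implicit, but the core argument is identical.
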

\begin{proof}
Assume for contradiction that two different cells $c$ and $c'$ both become a leader during the same active round $r>1$.
All cells start with potential $0$, and, as captured in the net potential table, can increase or decrease this potential only by multiples of $1/2$.
Furthermore, the largest possible increase in a given round is $1$.
We assumed that round $r$ was the first round that $c$ and $c'$ start with a potential of at least $2$.
It follows that $c$ and $c'$ must have each started round $r-1$ with a potential in $\{1,3/2\}$.
According to the net potential table, however, it follows that $c$ and $c'$ would have both {\em decreased} their potential 
by $1/2$ during round $r-1$, starting round $r$ with a potential in $\{1/2, 1\}$, contradicting the assumption that both 
cells are elected leader in $r$.
\end{proof}

\paragraph{Proof of Liveness.}
We now show that it does not take too long to elect a leader with reasonable probability.
In more detail, our goal is to prove the following:

\begin{theorem}
Fix some error bound $\epsilon > 0$ and network of $n\geq 1$ {\tt KnockBack} cells.
With probability at least $1-\epsilon$,
a leader is elected within $O(\log{(n/\epsilon)})$ rounds.
\label{thm:le:liveness}
\end{theorem}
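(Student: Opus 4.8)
The plan is to reduce the ostensibly continuous dynamics to a clean discrete ``coin-flipping'' leader-election process and then analyze that process with standard concentration arguments. The starting observation is that every potential change in an active round is a multiple of $1/2$, and the gradient contributes a uniform $+1/2$ whenever a cell sits at potential $\le 3/2$, so the entire active phase is governed by the net potential table. In particular, in round $1$ every cell moves from $0$ to $1/2$, so round $2$ is a competition round with $|A(2)| = n$. I would then establish the central structural invariant: at the start of every competition round the front set $A(r)$ sits at potential exactly $1/2$ while every other cell sits at potential $\le 0$ (and hence never fires during the competition round itself). Granting this, a competition round with $m = |A(r)|$ front cells behaves as exactly $m$ independent fair coins: letting $K = |A(r)\cap send(r)|$ be the number of front cells that fire, the net potential table forces precisely three outcomes. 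If $2 \le K < m$, the $K$ firing cells drop to $0$ and, one round later (when nobody fires), climb back to $1/2$ to form the new, strictly smaller front, while every other cell is knocked below $0$; if $K = 1$, the lone firing cell jumps to $3/2$ and, with all other cells knocked down, advances unopposed to $5/2 \ge 2$ two rounds later and is elected leader; and if $K \in \{0,m\}$ the front returns intact to $1/2$ after a two-round excursion (a ``re-flip''). Thus each phase costs only $O(1)$ rounds, the set of contenders is always exactly the front, and the survivors of a productive phase are exactly the cells that fired.

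I expect the main obstacle to be proving this invariant rigorously, because it is precisely the stall excursions ($K=0$) that threaten it: when no front cell fires, the front climbs from $1/2$ to $1$ while the highest non-front cells simultaneously climb from $0$ to $1/2$, momentarily entering the competition range. The crux is to verify, via the net potential table, that in the following round the front (which now fires with probability $1$ because its potential is $\ge 1$, and which receives a ligand from the $\ge 2$ other front cells) falls back to exactly $1/2$, while every transiently raised non-front cell is knocked back to $\le 0$ regardless of whether it fired. Carrying this case analysis through each transition type (reduce, leader, and the two stall cases) is what closes the induction and certifies that non-front cells can never join the front, making the reduction to the coin-flipping process exact; the lower-bound clamp at $\omega = -2$ only affects cells already far below $0$ and is therefore harmless. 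A convenient consequence worth isolating is that the front size is non-increasing and can reach the value $1$ only through the $K = 1$ transition, which simultaneously elects a leader, so we never have to reason about a lone contender being perpetually knocked back by stragglers.

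With the process in hand, the remaining step is the concentration analysis, which I would split into two regimes. For large fronts (say $m \ge 8$), a Hoeffding bound gives $\Pr[\,m/4 \le K \le 3m/4\,] \ge c_0$ for an absolute constant $c_0 > 0$, so with constant probability a phase is ``good'' and shrinks the front by a constant factor while keeping $K \ge 2$; hence $O(\log n)$ good phases suffice to drive the front below the threshold $8$, and a Chernoff bound on the number of good phases among the first $O(\log n + \log(1/\epsilon))$ phases shows this occurs with probability at least $1 - \epsilon/2$. For small fronts ($2 \le m < 8$), each phase elects a leader with probability at least $\min_{2 \le m < 8} m\,2^{-m} = \Omega(1)$, so a geometric tail bound shows a leader appears within a further $O(\log(1/\epsilon))$ phases with probability at least $1 - \epsilon/2$. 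Summing the two regimes gives at most $O(\log(n/\epsilon))$ phases; since each phase is $O(1)$ rounds and the two bad events together have probability at most $\epsilon$, a leader is elected within $O(\log(n/\epsilon))$ rounds with probability at least $1 - \epsilon$, as required.
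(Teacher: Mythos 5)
Your proposal is correct and follows essentially the same route as the paper: establish that competition rounds recur every other round with the contender set $A(r)$ absorbing and sitting at potential $1/2$ (the paper's Lemmas~\ref{lem:le:absorb} and~\ref{lem:le:comp}), show each competition round shrinks the contenders by a constant factor with constant probability (the paper uses Markov plus a union bound where you use Hoeffding on $K$), and finish with a Chernoff bound on the number of productive phases plus a geometric tail for the small-front endgame. Your observation that the front can only reach size $1$ via the $K=1$ transition (which already elects a leader) is a small, valid simplification of the paper's separate single-competitor analysis, but it does not change the overall structure of the argument.
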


To prove this theorem, we leverage the definitions from the above preliminaries section, 
to establish that $B$ is an absorbing state.

\begin{lemma}
Fix some active round $r\geq $ and cell $c\in B(r)$.
For every $r' > r$: $c\in B(r')$.
\label{lem:le:absorb}
\end{lemma}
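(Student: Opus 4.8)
My plan is to reduce the claim to a single-round statement and then induct. Concretely, it suffices to show that for any active round $r$ and any $c\in B(r)$ we have $c\in B(r+1)$; chaining this across the (necessarily contiguous) prefix of active rounds that follows $r$, and appealing to the freezing convention once a leader is elected, then yields $c\in B(r')$ for all $r'>r$. The engine of the one-step argument is the net potential table combined with the single-hop topology: since every pair of cells is adjacent, a cell receives a ligand in round $r$ exactly when some \emph{other} cell sends, so the entire configuration update in an active round is determined by the send set $S=send(r)$. Reading the table accordingly yields three regimes. If $S=\emptyset$, every cell gains $+1/2$. If $|S|=1$, the lone sender gains $+1$ (it sends but, crucially, does not receive its own ligand) while every other cell loses $1$. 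If $|S|\ge 2$, every sender loses $1/2$ and every non-sender loses $1$.

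The second ingredient is the monotonicity of the shared firing function $f$, which pins the senders to the top of the potential ordering. Because all active potentials lie on the half-integer grid strictly below $2$, a cell at potential $\ge 1$ fires deterministically, a cell at $1/2$ fires with probability $1/2$, and a cell below $1/2$ never fires. Two consequences steer the case analysis: a unique sender must lie in $A(r)$ (any strictly higher cell would sit at potential $\ge 1$ and hence also fire, contradicting $|S|=1$), and whenever $\hat{p}(r)\ge 1$ every cell of $A(r)$ fires. Throughout I will use the grid fact $C_r(c)\le \hat{p}(r)-1/2$ for $c\in B(r)$.

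With these in hand I would run a short case split on $|S|$. When $S=\emptyset$ all potentials shift up by $1/2$, so the ordering and all gaps are preserved and $c$ stays strictly below the maximum. When $|S|\ge 2$ the new maximum is attained by the sending cells of $A(r)$, at $\hat{p}(r)-1/2$ if $\hat{p}(r)\ge 1$ and at $0$ if $\hat{p}(r)=1/2$, and a direct computation places every $B(r)$ cell at least $1/2$ below this value. The only delicate case is $|S|=1$: the lone sender $s^*\in A(r)$ jumps to $\hat{p}(r)+1$ while $c$ drops by $1$. If $\hat{p}(r)=1/2$ the new maximum is $s^*$ at $3/2$ and $c$ sits well below it; if $\hat{p}(r)\ge 1$ then $s^*$ reaches a potential $\ge 2$ and is elected leader in round $r+1$, which is exactly where the argument leaves the regime of active rounds.

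I expect this last sub-case to be the main obstacle, not for its arithmetic but because it is where the clean inductive structure breaks: round $r+1$ is no longer active and the net potential table no longer applies. Here I would invoke the safety theorem (Theorem~\ref{thm:le:safety}) to conclude $s^*$ is the unique leader, and the freezing convention to conclude that $c$, whose potential is at most $0<2$, becomes a non-leader frozen strictly below $s^*$; thereafter $\hat{p}(r')$ equals the leader's potential for every $r'\ge r+1$, so $c$ remains in $B(r')$ permanently. Verifying that $c$'s frozen value is genuinely below $2$ (so $c$ is a non-leader rather than an accidental co-leader) is the one place where the bound $C_r(c)\le\hat{p}(r)-1/2$ must be used explicitly, and it is also where the single-hop feature that a lone sender does not receive its own ligand becomes essential, since that is exactly what lets one cell pull decisively away from the pack.
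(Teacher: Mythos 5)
Your proof is correct and follows essentially the same route as the paper's: reduce to the one-step claim $c\in B(r)\Rightarrow c\in B(r+1)$, then case-analyze using the net potential table and the monotonicity of $f$ (you split on $|send(r)|$ where the paper splits on $\hat p(r)$, but the cases and arithmetic are the same). Your explicit handling of the terminal sub-case where the lone sender is elected leader, via the freezing convention, is a point the paper's proof leaves implicit, and is a welcome bit of extra care rather than a divergence.
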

\begin{proof}
Fix some $r$, $c$ and $B(r)$ as specified above.
To prove the lemma it is sufficient to show that $c\in B(r) \Rightarrow c\in B(r+1)$,
as the same argument can then be reapplied inductively to achieve the property for any $r' > r$.

Because $c\in B(r)$ it follows that $C_r(c) < \hat p(r)$,
which further implies that $A(r)$ is non-empty.
We proceed with a case analysis on $\hat p(r)$.
In the following, we leverage the property that all potential values are multiples of $1/2$.

\begin{itemize}
	\item If $\hat p(r) < 1/2$, then no cell sends a signal in $r$, and, according to the net potential table, all cells increase their potential
	by $1/2$, preserving the gap between $C_r(c)$ and $\hat p(r)$.
	
	\item If $\hat p(r) = 1/2$, then $C_r(c) \leq 0$. It follows that cell $c$ does not signal (by emitting a ligand) during round $r$. There are two relevant sub-cases
	based on the behavior of the cells in $A(r)$. If no cell in $A(r)$ signals, then all cells once again increase their potential by $1/2$,
	preserving the gap between $C_r(c)$ and $\hat p(r)$. If at least one cell in $A(r)$ signals,
	then $u$ decreases its potential by $1$ while the sender(s) decrease their potential by at most $1/2$ (and, in the case of a single sender,
	actually increase potential by $1$).  Either way the gap between $C_r(c)$ and $\hat p(r)$ grows in this round.
	
	\item If $\hat p(r) \geq 1$, then all cells in $A(r)$ signal. If $u$ also sends, then all cells reduce their potential by $1/2$, 
	preserving the gap between $C_r(C)$ and $\hat p(r)$.
	If $u$ does not send, then as argued in the above case, this gap grows.
		
\end{itemize}

In all cases, $c\in B(r+1)$. 
\end{proof}

Having established the dynamics between the $A$ and $B$ partitions,
we now focus on the competitions that move cells between these sets,
proving that competition rounds happen frequently until a leader is elected.

\begin{lemma}
Fix some round $r\geq 1$.
If $r$ is a competition round then either: $r+2$ is a competition round or a leader is elected by $r+2$.
\label{lem:le:comp}
\end{lemma}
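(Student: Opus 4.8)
The plan is to exploit the rigidity of a competition round: since $\hat p(r)=1/2$ and all potentials are multiples of $1/2$, the cells split into $A(r)$, those at the maximum potential $1/2$, and $B(r)$, those strictly below. By Lemma~\ref{lem:le:absorb} the cells of $B(r)$ stay trapped below the maximum, and since firing functions vanish below $1/2$, no cell of $B(r)$ can send this round. Thus the only randomness is which cells of $A(r)$ fire, each doing so independently with probability $f(1/2)=1/2$. I would therefore reduce the whole statement to a case analysis on $j:=|send(r)|$, the number of cells that fire, reading off the one-round potential updates from the net potential table.

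When $j\geq 2$, every sender receives a ligand from another sender, so by the net potential table each drops to $0$ while every non-sender drops further; hence $\hat p(r+1)=0$, nothing fires in round $r+1$, every cell gains $1/2$ from its gradient, and the former senders return to $1/2$, making $r+2$ a competition round. When $j=1$, the lone sender receives nothing and climbs to $3/2$ while all other cells receive and drop below $1/2$; in round $r+1$ it alone has potential $\geq 1/2$, fires deterministically, again receives nothing, and climbs to $5/2\geq 2$, so it is elected leader at the start of $r+2$.

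The remaining case $j=0$ is where the cells of $B(r)$ can interfere. Here nothing fires, so every cell gains $1/2$ and $\hat p(r+1)=1$, with the $|A(r)|$ cells now at potential $1$ and firing deterministically in round $r+1$. If $|A(r)|\geq 2$ these cells knock one another back to $1/2$ regardless of what the $B$ cells do, giving a competition round at $r+2$. The delicate subcase is $|A(r)|=1$: a cell of $B(r)$ sitting at potential $0$ rises to $1/2$ by round $r+1$ and may then fire, so the lone candidate's outcome hinges on that cell's coin. I expect this to be the main obstacle, and the resolution is to observe that I need not control the coin: if that cell fires, the candidate receives a ligand and returns to $1/2$ (a competition round); if it does not, the candidate receives nothing, climbs to $2$, and is elected at $r+2$. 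Either way the conclusion holds, and in every other case the $B$ cells are harmless because they all receive a ligand and sink (when $j\geq 1$) or cannot prevent the mutual knockback among the $A$ cells (when $j=0$ and $|A(r)|\geq 2$).

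Finally I would verify that these cases are exhaustive and that a leader genuinely cannot appear before round $r+2$: a single round raises any potential by at most $1$, so the threshold $2$ is unreachable from $1/2$ in one step. This confirms that in every case either $r+2$ is a competition round or a leader is elected by $r+2$.
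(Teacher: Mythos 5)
Your proposal is correct and follows essentially the same route as the paper's proof: a case analysis on $|send(r)|$ (the cases $0$, $1$, and $\geq 2$) using the net potential table, with the absorbing property of $B$ ensuring only cells of $A(r)$ can fire. The only cosmetic difference is in the $|send(r)|=0$ case, where you split on $|A(r)|$ while the paper splits on $|send(r+1)|$; these decompositions cover the same ground and reach the same conclusions.
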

\begin{proof}
Consider the possibilities for the signaling behavior in competition round $r$.
If $|send(r)| = 0$, then by the net potential table,
all cells increase their potential by $1/2$.
It follows that the cells in $A(r+1) = A(r)$ now have potential value $1$,
while the cells in $B(r+1)$ have potential values no larger than $1/2$.
All cells in $A(r+1)$ will therefore send a ligand. It is possible that some cells in $B(r+1)$ send as well.
There are two relevant sub-cases for round $r=1$. 

\begin{itemize}
\item If $|send(r+1)| > 1$, then all cells decrease their potential by $1/2$.
It follows that the cells in $A(r+2) = A(r+1) = A(r)$ decrease their potential to $1/2$.
Therefore, round $r+2$ is once again a competition round.

\item On the other hand, if $|send(r+1)| = 1$,
the single sender $c\in A(r+1)$ will increase its potential by $1$ and therefore be elected leader in round $r+2$.

\end{itemize}

Another possibility for competition round $r$ is that $|send(r)| = 1$.
In this case, let $c\in A(r)$ be the single cell that sends a ligand in $r$.
By the net potential table, it increases its potential by $1$ while all other cells decrease their potential by $1$.
It follows that $C_{r+1}(c) = 3/2$, while for all $c' \neq c, C_{r+1}(c') \leq -1/2$.
Cell $c$ will sends alone in round $r+1$, pushing its potential beyond $2$, and therefore
electing it leader in round $r+2$.

The final possibility is that $|send(r)| > 1$.
In this case, by the net potential table, all cells decrease their potential by $-1/2$.
It follows that $\hat p(r+1) = 0$. During round $r+1$, therefore, no cell signals.
Therefore, all cells increase their potential by $1/2$.
Accordingly, $\hat p(r+2) = 1/2$, making $r+1$ a competition round.
\end{proof}

Another important property of competition rounds is that with constant probability they reduce
the number of cells in $A$ by a constant fraction due to the case in which some cells send a ligand and some do not.

\begin{lemma}
Fix some competition round $r \geq 1$.
Let $n_r = |A(r)|$ be number of cells that might potentially send a ligand in $r$.
If $n_r > 1$, then
with probability greater than $1/12$:  $0 < |send(r)| < (3/4)n_r$.
\label{lem:le:prob}
\end{lemma}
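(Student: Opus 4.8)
The plan is to reduce the statement to an elementary fact about the binomial distribution. The first step is to identify the distribution of $|send(r)|$ in a competition round. By definition of a competition round we have $\hat p(r) = 1/2$, so every cell in $A(r)$ begins the round at potential exactly $1/2$, which lies in the interval $[1/2,1)$ on which the {\tt KnockBack} firing function takes the value $1/2$; meanwhile every cell in $B(r)$ begins with potential strictly below $1/2$, where the firing function is identically $0$ and hence never emits a ligand. Since the bioelectric events fire independently across cells (Step 2 of the execution), the set $send(r)$ is exactly the random subset of $A(r)$ obtained by including each of its $n_r$ cells independently with probability $1/2$. Thus $X := |send(r)|$ is distributed as $\mathrm{Binomial}(n_r, 1/2)$, and the problem becomes bounding $\Pr[0 < X < (3/4) n_r]$ from below.

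Next I would pass to the complement. Because $n_r \ge 2$, the threshold satisfies $(3/4) n_r \ge 3/2 > 0$, so the events $\{X = 0\}$ and $\{X \ge (3/4) n_r\}$ are disjoint and together form the complement of the target event. Hence
\[
\Pr\!\left[0 < X < \tfrac34 n_r\right] = 1 - \Pr[X = 0] - \Pr\!\left[X \ge \tfrac34 n_r\right].
\]
It then suffices to bound the two subtracted terms. The first is immediate: $\Pr[X = 0] = 2^{-n_r} \le 1/4$, again using $n_r \ge 2$.

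For the second term I would invoke the symmetry of the binomial about its mean $n_r/2$. Since $(3/4) n_r > n_r/2$ for all $n_r > 0$, the event $\{X \ge (3/4) n_r\}$ is contained in $\{X > n_r/2\}$; and by reflection symmetry $\Pr[X > n_r/2] = \Pr[X < n_r/2]$, which forces $\Pr[X > n_r/2] = (1 - \Pr[X = n_r/2])/2 \le 1/2$. Combining the two bounds yields $\Pr[0 < X < (3/4) n_r] \ge 1 - 1/4 - 1/2 = 1/4 > 1/12$, as required.

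Since the argument is short, there is no serious obstacle; the only points demanding care are the modeling step --- confirming that $B(r)$ cells are guaranteed silent and that firings are independent, so the count is genuinely $\mathrm{Binomial}(n_r, 1/2)$ --- and the bookkeeping around the non-integer threshold $(3/4) n_r$ and the strict versus non-strict inequalities, which is precisely why passing to the disjoint complement is convenient. I note that the constant $1/12$ in the statement is very loose: the argument above already delivers $1/4$, and direct computation for small $n_r$ (for instance $1/2$ when $n_r = 2$) confirms the true probability stays well above the claimed value, so no separate case analysis on small $n_r$ is needed.
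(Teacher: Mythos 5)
Your proof is correct, and its overall shape matches the paper's: both decompose the complement of the target event into the two disjoint bad events $\{X=0\}$ and $\{X \ge (3/4)n_r\}$ for $X = |send(r)| \sim \mathrm{Binomial}(n_r,1/2)$, bound each, and combine. The one genuine difference is how the upper tail is handled. The paper applies Markov's inequality, $\Pr[X \ge (3/2)E(X)] \le 2/3$, which together with $\Pr[X=0] \le 1/4$ gives the stated (and admittedly loose) constant $1/12$. You instead observe that $\{X \ge (3/4)n_r\} \subseteq \{X > n_r/2\}$ and use the reflection symmetry of $\mathrm{Binomial}(n_r,1/2)$ about its mean to get $\Pr[X > n_r/2] \le 1/2$, which improves the final constant to $1/4$. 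This is essentially the ``stricter concentration argument'' the paper itself gestures at in its closing parenthetical; it costs nothing in generality here because the success probability is exactly $1/2$, which is what makes the symmetry argument available. Since the lemma only feeds an asymptotic analysis downstream, either constant suffices, but your version is tighter and no less elementary. Your care over the modeling step (that $B(r)$ cells are silent because $f(x<1/2)=0$, and that firings are independent) is also consistent with what the paper implicitly assumes.
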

\begin{proof}
Fix some competition round $r$ and $n_r = |A(r)|$ as specified in the lemma statement.
In the following we call the cells in $A(r)$ the competitors for the round.
Let $X_r = |send(r)|$ be the random variable that describes the number of competitors that send ligands in round $r$.
 Because each competitor sends with probability $1/2$,
we know $E(X_r) = n_r/2$.
To establish an upper bound on this count,
we apply Markov's inequality:

\[  \Pr\left(X_r \geq (3/2)E(X_r)\right) \leq \frac{E(x)}{(3/2)E(x)} = 2/3.\]

\noindent Stated another way: the probability that $X_r \geq (3/2)E(X_r) = (3/4)n_r$ is no more than $2/3$.
From the lower bound perspective, we  leverage the assumption that $n_r \geq 2$ to establish the following:

\[ \Pr\left( X_r = 0 \right) = 2^{-n_r} \leq 1/4.  \]

\noindent Finally, applying a union bound to combine these bounds, it follows:

\[ \Pr\left( X_r = 0 \vee X_r \geq (3/4)n_r \right) < 2/3 + 1/4 = 11/12. \]

\noindent Therefore, we satisfy the constraints of the lemma statement with a probability greater than $1/12$, as required.
(Notice that this bound is relatively loose. Replacing a union bound with
a stricter concentration argument would yield a larger constant.
But for the purposes of our analysis, which focuses on the asymptotic dynamics of this system,
any reasonable constant here is sufficient.)
\end{proof}

We now have all the pieces needed to prove Theorem~\ref{thm:le:liveness}.

\begin{proof}[Proof (of Theorem~\ref{thm:le:liveness})]
By the definition of {\tt KnockBack},
round $2$ is a competition round.
By Lemma~\ref{lem:le:comp}, this is also true of rounds $4$, $6$, $8$, and so on, until a leader is elected.

Fix some such competition round $r$.
Notice, if a cell $c\in A(r)$ does not send a ligand in $r$,
but $|send(r)| > 0$,
then $\hat p(r+1) - C_{r+1}(c) \geq 1/2$,
meaning that $c \in B(r+1)$.
By Lemma~\ref{lem:le:absorb},
$c$ will never again be in set $A$.

In this case, we say $c$ is {\em knocked out} in competition round $r$.
More generally, Lemma~\ref{lem:le:absorb} implies that the $A$ sets are monotonically non-increasing.
That is, $A(1) \supseteq A(2) \supseteq A(3) \supseteq ...$

Assume we arrive at some competition round $r$ such that $A(r) = \{c\}$, that is, there is a single competitor $c$ left in the system.
It is straightforward to verify that if $|send(r)| = 1$, then $c$ will become leader in round $r+2$:
in round $r$, $c$ increases its potential to $3/2$ while all other cells decrease their potential by $1$ to values lower than $1/2$;
$c$ then sends a ligand alone with probability $1$ during round $r+1$ and becomes leader in $r+2$.
The probability that $c$ sends in $r$ is exactly $1/2$.
Therefore, once we we get to a {\em single competitor} state, we have probability $1/2$ of electing a leader (within two rounds) at each
subsequent competition round.
The probability that we experience $t' = \log{\frac{2}{\epsilon}}$ competition rounds in a single competitor state {\em without} electing a leader is upper bounded
by $(1/2)^{t'} = \epsilon/2$.

We now bound the number of competition rounds required to drive us to a single competitor state with sufficiently high probability. 
If $n=1$, this occurs at the beginning of the first round with probability $1$.
So we continue by considering the case where $n>1$.

To do so, we say a competition round $r$ in this context is {\em productive} if at least $(1/4)$ of the remaining competitors are knocked
out in this round. 
Because $A(1) = n$, after $t$ productive competition rounds, at most $n(1-1/4)^t$ cells remain competitors.
Because $n(1-1/4)^t < n\cdot \exp(-t/4)$, it follows that the total number of productive competition rounds
before arriving at a single competitor state is upper bounded by $t = 4\ln{n}$.
(A key property of any knock out style algorithm is that you can never have a round in which all cells are knocked out,
as to be knocked requires at least one cell that sends and is therefore not knocked out in that round.)

By Lemma~\ref{lem:le:prob},
a given competition round is productive with probability greater than $1/12$.
This lower bound holds regardless of the execution history.

Let random variable $Y_k=\sum_{i}^k X_i$, where each $X_i$ is the trivial independent random indicator variable that evaluates to $1$ with probability $1/12$,
 and otherwise $0$.
Let $p_k$ be the probability that the number of competitors reduces to $1$ after no more than $k$ competition rounds.
A standard stochastic dominance argument establishes that $p_k > \Pr(Y_k \geq 4\ln{n})$.

Consider $k=96\ln{(n/\epsilon)}$.
For this value, $E(Y_k) = 8\ln{(n/\epsilon)}$.
Applying a Chernoff bound,
it follows:

\[  \Pr\left(Y_k \leq E(Y_k)/2 = 8\ln{(n/\epsilon)}\right) \leq \exp\left(  - E(Y_k)/8 \right) = \exp\left( -\ln{(n/\epsilon)} \right) = \frac{\epsilon}{n}.\]
%\[  \Pr\left(Y_k \leq E(Y_k)/2 = 8\ln{(n/\epsilon)}\right) \leq \exp\left{  - E(Y_k)/8 \right} = \exp\left{ -\ln{(n/\epsilon)} \right}.\]

\noindent Because $8\ln{(n/\epsilon)} < 4\ln{n}$ and $\frac{\epsilon}{n} \leq \epsilon/2$,
it follows that for this definition of $k$,
$p_k > \Pr(Y_k \geq 4\ln{n}) > 1- \epsilon/2$.

Combining our two results with a union bound, the probability that $t' + k$ competition rounds is not sufficient to elect a leader
is less than $\epsilon$.
By Lemma~\ref{lem:le:comp},
there is a competition round at least every other round until a leader is elected.
It follows that $O(t'+k) = O(\ln{(n/\epsilon)})$ total rounds is sufficient to elect a leader with probability at least $1 - \epsilon$.
\end{proof}

\subsection{Maximal Independent Sets}
\label{sec:sym:mis}

We now study the behavior of the {\tt KnockBack} cell 
when executed in a multihop network topology that satisfies a natural constraint defined below.
We show, perhaps surprisingly,
that this simple cell efficiently solves the {\em maximal independent set} (MIS) problem in this context.

In slightly more detail, solving the MIS problem requires that the system satisfy
 the following two properties: (1) {\em maximality}, every cell is a leader or neighbors a leader; and (2) {\em independence},
no two neighbors are leaders.
We prove that the leaders elected by {\tt KnockBack} in a multihop network always satisfy property $2$,
and that with high probability in the network size $n$,
property $1$ is satisfied in $O(\text{polylog}(\Delta)\log{n})$ rounds,
where $\Delta$ is the maximum degree in the network topology (and in many biological contexts, likely a small constant).
We then show that the algorithm still efficiently stabilizes to an MIS even if we start cells at arbitrary potential values.

As we elaborate in Section~\ref{sec:intro},
the simplicity, efficiency, and stabilizing nature of generating MIS's with {\tt KnockBack}
leads us to hypothesize that bioelectrics might play a role in the observed generation of MIS patterns in the epithelial cells of flies~\cite{afek:2011}.
As we also elaborate in Section~\ref{sec:intro},
the round complexity of our solutions is comparable to existing solutions in more powerful computation models.

\paragraph{The Unit Ball Graph Property.}
We study the MIS problem in connected multihop networks that
satisfy the unit ball graph (UBG) property
first described by Kuhn et~al.~\cite{kuhn:2005}.
A graph $G = (V,E)$ is considered a UBG (equivalently, satisfies the UBG property) if it satisfies
the following two constraints:
(1) there exists an embedding of the nodes in $V$ in a metric space
such that there is an edge $\{u,v\}$ in $E$ if and only if $dist(u,v) \leq 1$;
and (2) the doubling dimension of the metric space,
defined as the smallest $\rho$ such that every ball can be covered by at most $2^{\rho}$ balls
of half its radius, is constant.

It is typical to think of cells embedded in two or three-dimensional Euclidean space,
where only nearby cells can directly interact.
Both these spaces satisfy the UBG property.
By assuming a UBG graph, however, 
not only can we produce results that apply to multiple dimensionalities,
we also allow for more general distance function definitions that can model natural obstructions and occasional
longer distance connections.  Our MIS analysis strongly leverage the bounded growth of graphs that satisfy this property.

\paragraph{Preliminaries.}  
Throughout this analysis, we assume that all cells start with initial potential $q_0$ as specified by the {\tt KnockBack} definition.
We will later tackle the case where cells start with arbitrary initial potential values and show that it still stabilizes to an MIS.
In the following, when we say that a cell is \emph{active}, we mean that if has potential $< 2$, and all of its neighbors have
potential $< 2$.  That is, a cell becomes inactive when either it joins the MIS (becomes a leader), or one of its neighbors joins the MIS.

\paragraph{Safety.}

First, we observe that if a cell reaches potential $1.5$, then forever thereafter it continues 
to have high potential, while all of its neighbors remain with negative potential.
\begin{lemma}
\label{lem:knockout}
If cell $c$ in round $r$ increases its potential from a value $< 1.5$ to a value $\geq 1.5$,
then in every round $> r$: (i) cell $c$ maintains a potential $\geq 1.5$, and (ii) every
neighboring cell maintains a potential $< 0$.
\end{lemma}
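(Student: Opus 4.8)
The plan is to reduce the lemma to a single structural fact about round $r$ and then propagate it forward by induction. The key preliminary observation, which I would state first, is that since every cell starts at $q_0 = 0$ and the per-round update applies only offsets that are themselves multiples of $1/2$ --- the event offset $\pm 1/2$, the membrane offset $-3/2$, and the gradient offset, which at any reachable potential equals $0$ or $\pm 1/2$ (the intermediate $\pm z$ regimes require a potential strictly between $1.5$ and $2$ or between $2$ and $2.5$, which no multiple of $1/2$ occupies) --- every reachable potential is an integer multiple of $1/2$. This is exactly the reasoning behind the net potential table of Section~\ref{sec:sym:le}, and it carries over unchanged to the multihop setting since the per-cell update rule is identical. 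Consequently $C_r(c) < 1.5$ forces $C_r(c) \le 1$.

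The heart of the argument is to show that the hypothesis ``$c$ rises from $< 1.5$ to $\ge 1.5$ during round $r$'' forces a very specific event pattern: $c$ must fire its bioelectric event (and hence emit $m$ to all neighbors), and no neighbor of $c$ may emit $m$ back. I would establish this by a short case analysis on $C_r(c) \in \{1/2, 1\}$, bounding the net change. Since $C_r(c) \le 1$, the gradient contributes exactly $+1/2$, so the maximum possible net increase is $+1$ (event fires, nothing received); reaching $\ge 1.5$ from $C_r(c) = 1/2$ thus demands both a firing and no reception, while from $C_r(c) = 1$ (where $f(1) = 1$ already forces a firing) it demands no reception, as any reception would subtract $3/2$ and leave $c$ at $\le 1/2$. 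Lower values of $C_r(c)$ cannot reach $1.5$ at all. From ``no neighbor emits $m$'' I then read off that no neighbor fired, which (because $f(x) = 1$ for $x \ge 1$) pins each neighbor's round-$r$ potential to $\le 1/2$.

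With that pattern in hand the base case is immediate: $C_{r+1}(c) \in \{1.5, 2\}$, and each neighbor $c'$ --- which did not fire, did receive $m$ from $c$ (giving $-3/2$), and has gradient $+1/2$ --- has net change $-1$, landing at $C_{r+1}(c') \le -1/2 < 0$. For the inductive step I would assume $C_s(c) \ge 1.5$ with all neighbors $< 0$ at some round $s > r$ and split on whether $c$ has already crossed the leader threshold. If $C_s(c) \ge 2$, the expression event makes $c$ a leader and its (sub-threshold) neighbors non-leaders, freezing all potentials, so the invariant is preserved forever. Otherwise $C_s(c) = 1.5$ (the only multiple of $1/2$ in $[1.5,2)$); here $c$ fires with certainty while every neighbor, sitting at potential $< 0 < 1/2$, fires with probability $0$, so $c$ receives nothing and climbs to $2.5$, whereas each neighbor receives $m$ from $c$ and drops by a net $1$, remaining $< 0$.

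The main obstacle is the round-$r$ case analysis of the second paragraph: it is the only genuinely deductive step, because I must argue \emph{backwards} from the observed potential jump to the forced event outcomes, and I must handle the boundary values of the firing function $f$ and the gradient cases exactly (in particular, that $f(1) = 1$ removes the ``$c$ at $1$ did not fire'' possibility, and that no reachable potential lands in an intermediate gradient regime). Once that forced pattern is secured, both the base case and the induction are mechanical.
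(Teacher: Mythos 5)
Your proof is correct and takes essentially the same route as the paper's: argue backwards from the jump to $\ge 1.5$ that $c$ must have fired without receiving (since the maximum net gain is $+1$), conclude no neighbor fired and hence every neighbor sat at $\le 1/2$ and dropped to $\le -1/2$, then propagate by induction. Your added care about potential quantization and about the expression event freezing a leader's potential only makes explicit what the paper's proof leaves implicit.
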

\begin{proof}
The maximum increase in potential in a round is 1, so cell $c$ must begin round $r$ with potential at least $0.5$.  If cell
$c$ has potential $0.5$, then it can only increase its potential to $1.5$ by broadcasting.  If cell $c$ has potential at least $1$,
then it always broadcasts.  Therefore we can conclude that $c$ broadcasts in round $r$.

If $c$ receives a message in round $r$, then it decreases its potential.  Thus we can conclude that $c$ does not receive a message
in round $r$, i.e., none of its neighbors broadcast.  This implies that every neighbor had potential $\leq 0.5$.  Since each neighbor 
did not broadcast, but received a message from $c$, we know that each neighbor decreased its potential by $1$ and hence has potential $\leq -0.5$.

We can then see, by induction, that this situation contains forever thereafter: in each following round, $c$ broadcasts (since its potential is $\geq 1$) 
and its neighbors do not broadcast (since their potential is $\leq 0$).  Therefore, $c$'s potential remains $\geq 1.5$ and the neighbors potential remains $\leq -0.5$.
\end{proof}

As a corollary, we immediately see that no two neighbors can both be in the MIS:
\begin{lemma}
\label{lem:safety}
Let $c$ and $c'$ be two neighboring cells.  It is never the case $c$ and $c'$ both
have potential $> 1.5$.
\end{lemma}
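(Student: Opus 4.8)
The plan is to argue by contradiction and to reduce everything to the single-cell knockout guarantee already established in Lemma~\ref{lem:knockout}, rather than re-deriving any broadcasting behavior. Suppose, toward a contradiction, that there is some round in which the two neighbors $c$ and $c'$ simultaneously have potential $> 1.5$. Since every cell starts at potential $0 < 1.5$ and, by the net-potential analysis, a cell's potential can increase by at most $1$ in any single round, each of $c$ and $c'$ must cross the threshold $1.5$ from below at a well-defined round: let $r_c$ (resp.\ $r_{c'}$) be the first round in which $c$ (resp.\ $c'$) increases its potential from a value $< 1.5$ to a value $\geq 1.5$. Both rounds exist because each cell eventually exceeds $1.5$ under our contradiction hypothesis. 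Without loss of generality assume $r_c \leq r_{c'}$.

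The core step is to apply Lemma~\ref{lem:knockout} to cell $c$ at round $r_c$. That lemma guarantees that in every round strictly after $r_c$, every neighbor of $c$---and in particular $c'$, which is a neighbor---maintains potential $< 0$. Hence for all rounds $> r_c$ the potential of $c'$ is negative, so $c'$ cannot have potential $\geq 1.5$ at the start of any such round.

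It then remains to rule out the two ways $c'$ could still attain a high potential. If $r_{c'} > r_c$, then at the start of round $r_{c'}$ the cell $c'$ would need potential at least $0.5$ to climb to $\geq 1.5$ within one round (again using the at-most-$1$ increase), contradicting that its potential is $< 0$ in every round after $r_c$. If instead $r_{c'} = r_c$, then $c'$ ends round $r_c$ with potential $\geq 1.5$, i.e.\ it begins round $r_c + 1 > r_c$ with potential $\geq 1.5 > 0$, again contradicting the knockout guarantee applied to $c$. Either way we obtain a contradiction, so no round can have both neighbors above $1.5$.

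The only real obstacle is bookkeeping: making sure the two threshold-crossing rounds are well defined and correctly handling the boundary subcase $r_c = r_{c'}$, in which the two cells appear to cross during the same round. The monotone-crossing structure supplied by Lemma~\ref{lem:knockout} does all the substantive work once the crossing rounds are set up carefully, so the contradiction is immediate; notably, no probabilistic estimate or bounded-growth property of the UBG is needed for safety.
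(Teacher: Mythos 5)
Your proof is correct and follows essentially the same route as the paper's: both arguments identify the first threshold-crossing round and invoke Lemma~\ref{lem:knockout} to force every neighbor of that cell (in particular the other candidate) below potential $0$ forever after. Your explicit handling of the tie case $r_c = r_{c'}$ is a slightly more careful bookkeeping of what the paper folds into its ``w.l.o.g.'' step, but the substance is identical.
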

\begin{proof}
Let $t$ be the first round that either $c$ or $c'$ reaches potential $\geq 1.5$, and (w.l.o.g.) assume
that $c$ increased its potential in that round.  By Lemma~\ref{lem:knockout}, we conclude that in all
future rounds, $c$ has potential $\geq 1.5$ and $c'$ has potential $< 0$.  Thus $c'$ never enters the MIS.
\end{proof}

\paragraph{Definitions.}

The more interesting task is proving that eventually, every cell or one of its neighbors will enter the MIS,
and that this will happen quickly.  

We begin with a few definitions.  When a cell has no neighbors with larger potential, then it remain a 
candidate to enter the MIS.  We call such a cell a local maximum:
\begin{definition}
Fix some round $r$ and cell $c\in {\cal C}$. We say $c$ is a {\em local maximum} in $r$ if and only if for every 
neighbor $c'$ of $c$: $C_r(c) \geq C_r(c')$.  We say that $c$ is a {\em $k$-hop local maximum} if and only if for
every cell $c'$ within $k$ hops of $c$: $C_r(c) \geq C_r(c')$.
\end{definition}

Notice that if a cell is a local maximum, then it has no neighbors of larger potential, but it may well have neighbors of equal
potential.  In fact, it is only these neighbors of equal potential that will compete with it to enter the MIS.  We define $pN(c)$
to be exactly these neighbors, and the $p$-degree to be the number of such neighbors:
\begin{definition} 
Define $pN(c) = \{c' \in N(c) | C_r(c) = C_r(c')\}$, i.e., the active neighbors of $c$ that have the same potential.  
Define the $p$-degree $pDeg(c) = |pN(c)|$ to be the number of active neighbors of $c$ with the same potential as $c$.
\end{definition}

We will show that a cell $c$ that is a local maximum has (approximately) probability $1/pDeg(c)$ of entering the MIS within $O(\log \Delta)$ rounds.  
We will want to identify cells that are likely going to enter the MIS quickly, or have a neighbor that is likely to enter the MIS quickly.  We define a \emph{quick-entry} cell as follows:
\begin{definition}
We say that cell $c$ is a \emph{quick-entry} cell in round $r$ if it satisfies the following properties:
\begin{itemize}
	\item Cell $c$ is active.
	\item Cell $c$ is a local maximum.
	\item Every cell $c' \in pN(c)$ is a local maximum.
	\item For every cell $c' \in pN(c): pDeg(c') \leq 2\cdot pDeg(c)$.
\end{itemize}
\end{definition}
If $c$ is a quick-entry cell, we can argue that either it or one of its neighbors in $pN(c)$ will enter the MIS with constant probability.  
Because $c$ and its neighbors in $pN(c)$ are local maxima, we can expect they will enter the MIS with probability inversely propertional to their $p$-degree.  And because every neighbor of $c$ has $p$-degree at most twice that of $c$, we can conclude that each of these cells in $pN(c)$ has (approximately) probability $\geq 1/2\cdot pDeg(c)$ of entering the MIS.  Since there are $pDeg(c)$ such cells, we can conclude (after sidestepping issues of independence) that with constant probability, exactly one of them will enter the MIS.

\paragraph{Finding a quick-entry cell.}  
Our next step, then, is to show that for any cell $c'$, we can always find a quick-entry cell that is no more than $O(\log \Delta)$ hops away:
\begin{lemma}
\label{lem:find-quick}
Consider the subgraph consisting only of active cells.  
For every active cell $c'$, for every round $r$, there exists a quick-entry cell $c$ within distance $O(\log \Delta)$.
\end{lemma}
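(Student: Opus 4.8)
The plan is to reach a quick-entry cell from $c'$ by a single greedy walk through the induced subgraph on active cells, and then to bound the number of hops this walk takes by $O(\log\Delta)$. The walk is driven by a lexicographic measure $\Psi(x) = (C_r(x),\, pDeg(x))$ that I will force to strictly increase at every step. The crucial structural fact I would exploit is that, under the standing assumption that every cell starts at $q_0 = 0$, all active potentials are multiples of $1/2$ lying in $[\,\omega, 2) = [-2, 2)$, so there are only a constant number of distinct potential \emph{levels}; this is what keeps the ``vertical'' part of the walk short.

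I would define the walk by three rules, applied with the following priority at the current cell $x$. (Rule~1, ascend) If $x$ has an active neighbor of strictly larger potential, move to it. (Rule~2, repair) Otherwise, if some $y \in pN(x)$ is \emph{not} a local maximum, i.e.\ $y$ has a neighbor $z$ with $C_r(z) > C_r(y) = C_r(x)$, move from $x$ to $z$ (two hops). (Rule~3, balance) Otherwise, if some $y \in pN(x)$ has $pDeg(y) > 2\, pDeg(x)$, move to $y$. If none of the three rules applies, stop. The point is that failure of all three rules is exactly the quick-entry condition: failure of Rule~1 makes $x$ a local maximum; failure of Rule~2 makes every cell of $pN(x)$ a local maximum; and failure of Rule~3 gives $pDeg(y) \le 2\, pDeg(x)$ for all $y \in pN(x)$. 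Since the walk never leaves the active subgraph and $x$ is active, $x$ is then a quick-entry cell.

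For termination and the length bound, observe that Rules~1 and~2 strictly increase the first coordinate $C_r(x)$, while Rule~3 keeps it fixed and strictly (indeed more than) doubles $pDeg(x)$; hence $\Psi$ strictly increases lexicographically at each step, so the walk is a simple path and must halt. To count hops, note that the level $C_r(x)$ is non-decreasing and can increase only a constant number of times, since there are only constantly many potential levels; these ascent steps (Rules~1 and~2) therefore contribute only $O(1)$ hops. Between two consecutive increases of the level, every move is a Rule~3 step at a fixed level, and along such a run $pDeg$ starts at some value $\ge 1$ and more than doubles each time while never exceeding $\Delta$; thus there are at most $\log_2 \Delta$ Rule~3 steps per level, and $O(\log\Delta)$ in total. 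Summing, a quick-entry cell is reached within $O(\log\Delta)$ hops of $c'$.

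The step I expect to be the main obstacle is reconciling the degree-balancing with the ``every cell of $pN(c)$ is a local maximum'' requirement (the third bullet of the quick-entry definition). Naive balancing can walk off the plateau of local maxima and stop at a cell one of whose equal-potential neighbors still has a higher-potential neighbor, violating that requirement; Rule~2 is precisely what prevents this, by reinterpreting such a neighbor as a reason to ascend rather than to stop, which simultaneously keeps the walk monotone in $\Psi$. I would also remark that this lemma needs only the maximum degree $\Delta$ and the constant bound on the number of potential levels; the unit ball graph property is not invoked here, and will instead be needed later to control the \emph{number} of quick-entry cells and their near-independence.
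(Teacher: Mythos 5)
Your proof is correct, and it takes a genuinely different route from the paper's. The paper argues in two separate phases: it first over-provisions by finding a cell $c_1$ that is a $(\log\Delta+2)$-hop local maximum (repeatedly jumping to any strictly higher-potential cell within distance $\log\Delta+2$, which can happen only a constant number of times because there are constantly many potential levels), and then runs the degree-balancing walk from $c_1$; the multi-hop local-maximum property is what guarantees, after the walk wanders up to $\log\Delta$ hops, that the final cell and all of its equal-potential neighbors are still local maxima. You instead interleave everything into a single walk, replacing the over-provisioning with your Rule~2 (``repair''), and you certify termination and the hop bound with the lexicographic measure $\Psi=(C_r(x),pDeg(x))$: ascents are bounded by the constant number of potential levels, and balance steps are bounded by $\log_2\Delta$ per level since $pDeg$ more than doubles. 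Both arguments yield $O(\log\Delta)$ and both correctly observe that the stopping condition is exactly the quick-entry condition and that the UBG property is not needed here. Your version is arguably cleaner in that it uses only local moves of one or two hops rather than jumps within a $(\log\Delta+2)$-ball, and it makes explicit (via the repair rule) the failure mode that the paper handles implicitly through over-provisioning; the paper's version has the minor advantage that the two phases can be analyzed completely independently. One small point to keep tidy if you write this up: when $pDeg(x)=0$ the set $pN(x)$ is empty, so Rules~2 and~3 are vacuous and failure of Rule~1 alone already makes $x$ quick-entry; your doubling count should therefore start from $pDeg\geq 1$, as you note.
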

\begin{proof}
Fix a round $r$ and a cell $c'$.  We prove this lemma constructively.  We first identify a cell $c_1$ that is a $(\log(\Delta)+2)$-hop local maximum, meaning that no cell within $\log(\Delta)+2$ hops has larger potential.

Begin at cell $c_1 = c'$ and repeatedly execute the following:  If any cell $c$ within distance $\log{\Delta}+2$  (in the subgraph of active cells) has potential (strictly) larger than $c_1$, then set $c_1 = c$.  Repeat this process until every cell $c$ within distance $\log{\Delta}+2$  has $C_r(c) \leq C_r(c_1)$.  Since potentials change by $0.5$, the minimum potential is $-3$, and the maximum potential is $2$, it is clear we can only repeat this procedure at most 10 times before finding a cell with potential $2$ (i.e., the maximum potential).  Thus within distance $10(\log(\Delta) + 2)$ of our starting cell $c'$, we have identified an active cell $c_1$ that is a $(\log(\Delta)+1)$-hop local maximum.

Next, we identify a cell $c_2$ that has the same potential as $c_1$, is also a local maximum, its neighbors in $pN(c_2)$ are also local maxima, and it satisfies the requisite degree property.  Begin at cell $c_2 = c_1$.  Repeatedly execute the following: If any cell $c$ in $pN(c_2)$ has $p$-degree $pDeg(c) \geq 2\cdot pDeg(c_2)$, then set $c_2 = c$.  (Again, recall that we are only considering neighbors in the subgraph of active cells.) Repeat this process until every neighbor of $c_2$ in $pN(c_2)$ has $p$-degree less than $2\cdot pDeg(c_2)$.  

Notice that this process must terminate, since at each step $pDeg(c_2)$ doubles, and the $p$-degree can never be larger than $\Delta$.  Thus, within $\log{\Delta}$ steps, we have found such a cell $c_2$.  At each step, we moved from cell $c_2$ to a neighbor in $pN(c_2)$, so the potential of $c_2$ remains equal to the potential of $c_1$ throughout the process.  Finally, since $c_1$ was a $(\log(\Delta)+2)$-hop local maximum, and since $c_2$ is at most $\log(\Delta)$ hops away from $c_1$, we know that no cell within $2$ hops of $c_2$ has a larger potential than $c_2$ or its neighbors in $pN(c_2)$.  Thus, we conclude that $c_2$ and the cells in $pN(c_2)$ are all local maximum.  

Thus cell $c_2$ is a quick-entry cell, and $c_2$ is within distance $O(\log(\Delta))$ of the initial cell $c'$.
\end{proof}

\paragraph{Probability of quick-entry.}
Next, we prove that, given a quick-entry cell $c$, either $c$ or one of its neighbors really does enter the MIS with constant probability within $O(\log(\Delta))$ rounds.
\begin{lemma}
\label{lem:joinmis}
Consider the subgraph consisting only of active cells.
Let $c$ be a quick-entry cell. Then with probability at least $1/16$, either $c$ or a neighbor of $c$ enters the MIS within $O(\log \Delta)$ rounds.
\end{lemma}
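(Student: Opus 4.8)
The plan is to reduce the lemma to a per-cell estimate and then aggregate over the competitors of $c$. First I would pin down how a local maximum behaves at the competition potential. Since every reachable potential is a multiple of $1/2$ and $c$ is active, the only value in the firing range $[1/2,1)$ that $c$ can occupy is exactly $1/2$; moreover, because $c$ and every cell of $pN(c)$ is a local maximum, the only neighbors that can fire are the same-potential neighbors in the respective $pN$-sets, as every strictly lower neighbor has potential $\le 0$ and fires with probability $0$. Applying the net-potential table from Section~\ref{sec:sym:le} to a local maximum sitting at potential $1/2$ yields exactly three outcomes: it fires and receives no ligand, jumping to $3/2$ and hence entering the MIS by Lemma~\ref{lem:knockout} (a \emph{win}); it fires and also receives, dropping to $0$ and returning to $1/2$ two rounds later (it \emph{survives}); or it does not fire but receives, dropping to $-1/2$ and being permanently \emph{knocked out}. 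I would first verify that the surviving local maxima stay synchronized at the competition potential and remain local maxima, so that the competition among same-potential cells proceeds in a clean period-two rhythm.

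Next I would prove the per-cell bound: a local maximum $c'$ with $pDeg(c')=d'$ enters the MIS within $O(\log\Delta)$ competition rounds with probability at least $\frac{1}{c_1(d'+1)}$ for a fixed constant $c_1$. The key observation is that, conditioned on $c'$ firing in a round, $c'$ is never knocked out: it either wins (if none of its same-potential competitors fire) or survives with its relevant competitor set thinned to those competitors that also fired, a $\mathrm{Binomial}(d',1/2)$ subset. Unrolling this recursion, the competitor set roughly halves per competition round, so within $O(\log d')=O(\log\Delta)$ rounds it reaches a single survivor; by the symmetry of the coin flips among $c'$ and its $d'$ competitors, $c'$ is that survivor with probability $\approx 1/(d'+1)$. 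A monotone coupling then shows that interference from cells outside $\{c'\}\cup pN(c')$ can only remove competitors of $c'$, and hence only helps $c'$, so the clique estimate is a valid lower bound in the true dynamics.

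Finally I would aggregate the per-cell bounds over the $pDeg(c)$ cells of $pN(c)$. The quick-entry hypothesis gives $pDeg(c')\le 2\,pDeg(c)$ for every $c'\in pN(c)$, so each such $c'$ enters the MIS with probability at least $\frac{1}{c_0\,pDeg(c)}$ for a fixed constant $c_0$. Letting $X$ count the cells of $pN(c)$ that enter, linearity of expectation gives $E[X]=\Omega(1)$. To pass from $E[X]=\Omega(1)$ to $\Pr(X\ge 1)\ge 1/16$ I would use a second-moment (Paley--Zygmund) or Bonferroni bound, controlling the pairwise co-entry terms: adjacent pairs in $pN(c)$ contribute zero by Lemma~\ref{lem:safety}, and the remaining (non-adjacent) pairs are bounded using the local competition structure. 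Since every cell of $pN(c)$ is a neighbor of $c$, the event $X\ge 1$ implies that $c$ or a neighbor of $c$ enters the MIS, which is exactly the claim.

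The hard part will be the per-cell bound. In a single-hop clique the ``last survivor'' symmetry is exact, and each of the $d'+1$ cells wins with probability $1/(d'+1)$; in the multihop setting the competition groups $pN(\cdot)$ overlap and are \emph{not} cliques, a cell can fail to receive a ligand even when a same-potential group-mate fires, and cells can desynchronize (a non-firing, non-receiving cell jumps to $1$ rather than to $0$). Establishing that these effects can only help the target cell --- via a careful monotone coupling to the idealized clique process, together with a proof that the surviving local maxima remain at the competition potential and in lockstep --- is the delicate technical core, with the dependence bookkeeping in the aggregation step a secondary difficulty.
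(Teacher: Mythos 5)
Your overall decomposition --- a per-cell $\Omega(1/pDeg)$ entry bound for local maxima, followed by aggregation over $pN(c)$ using the quick-entry degree condition --- matches the paper's. But there is a genuine gap in both halves. For the per-cell bound, your primary justification is the ``last survivor'' symmetry giving probability $\approx 1/(d'+1)$; that symmetry holds only when the competitors form a clique with identical neighborhoods, and in $pN(c')$ it fails in the wrong direction: $c'$ is adjacent to \emph{all} $d'$ of its competitors, while those competitors need not be adjacent to one another, so $c'$ is the most exposed cell in its own competition group and its win probability is not controlled by exchangeability. The monotone coupling you invoke does not repair this, because the asymmetry is internal to $\{c'\}\cup pN(c')$, not caused by outside interference (indeed, since $c'$ is a local maximum, no cell outside $pN(c')$ can ever knock it out). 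The paper instead conditions on the explicit event that $c'$ broadcasts in every one of $\log(4s)$ consecutive competition rounds (probability $1/(4s)$) while each same-potential neighbor outside $S$ stays silent at least once (probability $\geq 1/2$ by a union bound over at most $2s$ such neighbors); conditioned on $c'$ always broadcasting, any neighbor that ever stays silent receives and is knocked out, which yields the $\Omega(1/s)$ bound directly. Your ``conditioned on $c'$ firing'' fallback is essentially this argument, but you cannot then also appeal to the unconditional symmetry --- the two estimates do not compose.

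The more serious gap is the aggregation. You propose Paley--Zygmund/Bonferroni and defer to ``controlling the pairwise co-entry terms,'' but for non-adjacent pairs in $pN(c)$ the event that both enter the MIS within $O(\log\Delta)$ rounds has no obvious $O(1/d^2)$ bound: entry is not a single product event over independent coins (a cell can also reach the MIS by silently drifting to potential $1$ in a round where none of its neighbors fire), and the per-cell events are dependent through shared neighbors. The paper's key device, absent from your proposal, is to make the per-cell events disjoint \emph{by construction}: $W_{c'}$ requires both that $c'$ has the all-ones bit string and that no other cell of $S$ does, so two such events cannot co-occur and the probabilities simply add, giving $\sum_{c'\in S}\tfrac{1}{8s}\bigl(1-\tfrac{1}{4s}\bigr)^{s}\geq 1/16$ with no second-moment computation. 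Without that disjointification, or an actual proof of the co-entry bounds, your final step does not go through.
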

\begin{proof}
Let $S = \{c\} \cup pN(c)$, and let $s = |pN(c)|$.  Notice every cell in $S$ has $p$-degree at most $2s$ in round $r$, and recall that every cell in $S$ is a local maximum in round $r$.  

In the special case where $c$ has no neighbors with the same potential, then $c$ enters the MIS with constant probability: if its potential is $< 0.5$, it advances (neither sending nor receiving) until its potential is $0.5$; at that point, with probability $1/2$ it broadcasts and advances its potential to $1.5$ and enters the MIS in the following round.  Since every neighbor has potential $< 0.5$ during this process, none of them broadcast and interfere.  We assume in the following that $c$ has at least one neighbor in round $r$ in $pN(c)$, i.e., $s \geq 1$.

In every round, we update $S$ as follows: if $c' \in S$ is a cell in $S$, and if the current round is a competition round for $c'$ in which $c'$ does not broadcast, then we remove $c'$ from $S$.  Intuitively, $S$ is the set of cells that remain candidates for entering the MIS.

Throughout the analysis, we can ignore inactive cells.  None of the cells in question neighbor an inactive cell with potential $2$ (as then they would themselves be inactive), and the remaining inactive cells have potential $< 0$ and hence never broadcast.

We begin with some useful observations about the behavior of a cell $c' \in S$ that is a local maximum.  We will then leverage these to make our main probabilistic argument.

\paragraph{Observations.} 
First, notice that as long as a cell $c'$ remains in $S$, it remains a local maximum: If $c'$ has potential $< 0.5$, neither $c'$ or its neighbors broadcast, so $c'$ increases its potential by $0.5$, while its neighbors increase their potential by at most $0.5$.  If $c'$ has potential $\geq 0.5$, then it has to broadcast and so it decreases its potential by at most $1/2$, while its neighbor also decrease their potential by at least $1/2$.  (If $c'$ has potential exactly $0.5$, it has to broadcast to remain in $S$.)  In either case, the potential gap between $c'$ and its neighbors is preserved.

Second, once a cell $c' \in S$ arrives at a competition round for the first time in the period starting with round $r$, it will return to a competition round every other round until either: (a) it is silent in a competition round in which a neighbor sends, or (b) it sends alone in a competition round in its neighborhood.  As we already proved in Lemma~\ref{lem:knockout}, if cell $c' \in S$ sends in a competition round $r'$, and no neighbor of $c'$ sends in $r'$, then no neighbor will ever send again and $c'$ will join the MIS in the next round.  Thus in case (b), the cell $c'$ joins the MIS within one round.

If, in round $r$, the cells in $S$ have potential 1, then in round $r$ they all broadcast, they all receive, and they all lower their potential to $0.5$.  If, in round $r$, the cells in $S$ have potential $< 0.5$, then they all advance in lockstep, increasing their potential until it is $0.5$.  Thus, all the cells in $S$ enter a competition round for the first time in the same round.

From that point on, cells in $S$ always remain at the same potential in every round, entering competition rounds every other round until a cell is silent in a competition round (in which case it exits $S$) or a cell in $S$ sends alone in its neighborhood in a competition round (in which case it enters the MIS). We refer to the rounds in which cells in $S$ have potential $0.5$ as $S$-competition rounds.

%If $c' \in S$ sends in competition round $r'$, and some neighbor $c''$ of $c$ does not send in this round, then $c''$ cannot raise its potential above $0$ until $c'$ fails to send in a future competition round. 
%
%This follows because $c''$ drops to $-1/2$ during competition round $r'$. If $c'$ sent alone, then by the above observation, $c'$ will be suppressed until $c'$ joins the MIS. If $c'$ does not send alone, then it drops to $0$. Because $c'$ is a local maximum, no neighbor of $c'$ sends during the next round.  Therefore, $c'$ increases by $1/2$. Cell $c'$ might also increase by $1/2$ (or decrease, if a neighbor of $c''$ outside of $c'$'s neighborhood sends), keeping its potential at most $0$. At this point, $c'$ is at a competition round again.

%If $r'$ is the first $S$-competition round, then let $T$ be the $\log(4s)$ alternating rounds that follows: $\{r', r'+2, r'+4, \ldots$.  
To analyze the behavior of cells during these competition rounds, we imagine that each cell $c'$ has a random bit string $B_{c'}$ of length $\log(4s)$ where each bit is $0$ with probability $1/2$ and $1$ with probability $1/2$.  Whenever cell $c'$ is in a competition round, it uses the next unused bit from the string $B_{c'}$ to decide whether or not to broadcast.  (This allows us to analyze the possible behavior of cells throughout the $S$-competition rounds, even if some cell enters the MIS early.)   

\paragraph{Winners and losers.}
For each cell $c' \in S$, we say that $c'$ is a \emph{winner} if: (i) the $B_{c'}$ string contains all $1$'s (i.e., it broadcasts in all the $S$-competition rounds), and (ii) for every cell $c'' \in pN(c') \setminus S$, the string $B_{c''}$ has at least one $0$ (i.e., it does \emph{not} broadcast in all the $S$-competition rounds).  Intuitively, this means that excluding cells in $S$, we can be sure that within $\log(4s)$ rounds, cell $c'$ knocks out each of its neighbors and hence enters the MIS.  (Obviously if $c'$ wins early, it may not consume all of its bits in $B_{c'}$.)

Notice that the cells in $pN(c') \setminus S$ are the neighbors of $c'$ that are not in $S$, but that also have potential $0.5$ during rounds in $T$.  There are at most $2s$ such neighbors, by the definition of $S$.  We do not care about the other neighbors of $c'$: since $c'$ is a local maximum, they all have potential $\leq 0$ and hence do not broadcast and cannot prevent $c'$ from entering the MIS (as long as $c'$ continues to broadcast in rounds in $T$).

We now argue that the probability that a cell $c'$ is a winner is $\geq 1/(8s)$.  First, each bit in $B_{c'}$ is $1$ with probability $1/2$, and thus the probability that the bit string is all $1$'s is $1/2^{\log(4s)} = 1/(4s)$.  Second, the same holds for each cell in $pN(c') \setminus S$, i.e., it has a probability $1/(4s)$ of having all $1$'s in its random bit string.  By a union bound, the probability that any cell in $pN(c') \setminus S$ has all $1$'s in its bit string is $\leq 2s/(4s) \leq 1/2$, since there are never more than $2s$ cells in this set of neighbors.  Therefore, the probability that no cell in $pN(c') \setminus S$ has all $1$'s in its bit string is at least $1/2$.  Since the bit strings are independent, the probability that $c'$ is a winner is at least $1/(8s)$.

We define the event $W_{c'}$ to be the event that (i) cell $c'$ is a winner and (ii) no other cell in $S$ has an all $1$'s bit string (i.e., no other cell in $S$ broadcasts in all the $S$-competition rounds).  Notice that for any two cells $c'$ and $c''$, the events $W_{c'}$ and $W_{c''}$ are disjoint since if they were both winners, then they would both have all $1$'s bit strings (and would both be broadcasting in every round in $T$).   Also, notice that the probability that $c'$ is a winner is independent of the probability that any other cell in $S$ has an all $1$'s bit string (as it only depends on neighbors not in $S$).

The probability of $W_{c'}$ can be bounded as follows. 
\begin{eqnarray*}
\Pr{W_{c'}} = \Pr{c' \textrm{ is a winner}}(1 - 1/2^{\log(4s)})^s \\
& \geq & \frac{1}{8s}\left(1 - \frac{1}{4s}\right)^s \\
& \geq & \frac{1}{8s}e^{-1/2} \\
& \geq & \frac{1}{16s}
\end{eqnarray*}
Thus, summing over the disjoint events $W_{c'}$ for all $c' \in S$, we conclude that the event $W_{c'}$ occurs for one cell in $S$ with probability at least $1/16$.  

This implies that, with probability at least $1/16$, by the end of the $S$-competition rounds, there is exactly one cell $c'$ in $S$ that is a winner.  This winner cell $c'$ necessarily goes on to enter the MIS within $1$ further rounds, as it has successfully suppressed all of its neighbors.  Since this occurs within $\log(4s)$ $S$-competition rounds, and these competition rounds alternate (until there is a winner), we conclude that this occurs within time $O(\log \Delta)$.
\end{proof}

Putting together the previous two lemmas, we conclude the following:
\begin{lemma}
\label{lem:smis:nearby}
Given any cell $c$ and round $r$, with probability at least $1/16$ there is a cell within distance $O(\log \Delta)$ that enters the MIS within $O(\log \Delta)$ rounds.
\end{lemma}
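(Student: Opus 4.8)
The plan is to obtain this statement by directly composing Lemma~\ref{lem:find-quick} and Lemma~\ref{lem:joinmis}, together with a short case analysis to cover cells that are already inactive. First I would fix an arbitrary cell $c$ and round $r$ and condition on the configuration $C_r$ at the start of that round. The point of this conditioning is that everything asserted by Lemma~\ref{lem:find-quick} is a deterministic property of $C_r$, whereas the probability in Lemma~\ref{lem:joinmis} is taken over the cells' random firing choices in rounds $r, r+1, \dots$. This separation lets me chain the two lemmas cleanly: I first \emph{locate} the relevant cell deterministically from $C_r$, and only then invoke the probabilistic bound on what happens to it.

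If $c$ is inactive in round $r$, then by the definition of activity either $c$ itself or one of its neighbors already has potential $\geq 2$, so a cell within distance $1 = O(\log\Delta)$ of $c$ is already in the MIS and the conclusion holds trivially (with probability $1$). The interesting case is when $c$ is active, which is exactly the hypothesis required by Lemma~\ref{lem:find-quick}. Assuming $c$ is active, I would apply Lemma~\ref{lem:find-quick} with $c' = c$ to produce a quick-entry cell $c^\ast$ at distance $O(\log\Delta)$ from $c$ in the subgraph of active cells. I would then apply Lemma~\ref{lem:joinmis} to $c^\ast$: because $c^\ast$ is a quick-entry cell, with probability at least $1/16$ either $c^\ast$ or one of its neighbors enters the MIS within $O(\log\Delta)$ rounds. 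The cell that enters is therefore within distance $O(\log\Delta) + 1 = O(\log\Delta)$ of $c$ by the triangle inequality, and the time bound is $O(\log\Delta)$ directly from Lemma~\ref{lem:joinmis}, so no asymptotic blow-up occurs. This reproduces exactly the claimed $1/16$ probability together with the $O(\log\Delta)$ distance and time bounds.

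The step requiring the most care is not a genuine obstacle but a matter of matching hypotheses. Lemma~\ref{lem:find-quick} is stated only for active cells, so the inactive case must be dispatched separately as above; and one must verify that the quick-entry cell is identified with respect to the same round $r$ at which Lemma~\ref{lem:joinmis} begins its probabilistic analysis. Since the location of $c^\ast$ is a deterministic function of $C_r$ while the randomness consumed by Lemma~\ref{lem:joinmis} lives entirely in rounds $\geq r$, there is no circular dependence between where we look and what subsequently happens there, and the composition is immediate.
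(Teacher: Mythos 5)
Your proposal is correct and follows essentially the same route as the paper: apply Lemma~\ref{lem:find-quick} to locate a quick-entry cell within $O(\log\Delta)$ hops and then invoke Lemma~\ref{lem:joinmis} to get the $1/16$ probability bound. The extra care you take with the inactive case and with separating the deterministic location step from the subsequent randomness is a reasonable tightening of the paper's two-line argument, but it is not a different approach.
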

\begin{proof}
Lemma~\ref{lem:find-quick} inplies that there is a quick-entry cell $c'$ within distance $O(\log \Delta)$ of $c$, and Lemma~\ref{lem:joinmis} says that either $c'$ or a neighbor of $c'$ enters the MIS within $O(\log \Delta)$ rounds with probability at least $1/16$.
\end{proof}

\paragraph{Leveraging topology.}

We are now ready to prove that eventually every cell or one of its neighbors enters the MIS, and that this occurs quickly.  To prove this, we will liverage the assumption that the underlying graph topology $G=(V,E)$ is a UBG with constant doubling dimension.  The key property we need from the topology of the graph (which is a standard property) is as follows:
\begin{lemma}
\label{lem:ubg}
For every independent set $I$, for every cell $c$: there are at most $O(k^{\rho})$ cells in $I$ within distance $k$ of $c$.
\end{lemma}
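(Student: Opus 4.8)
The plan is to exploit the geometric embedding underlying the UBG property, converting the graph-theoretic statement about an independent set into a packing argument in the metric space. The key observation is that hop-distance in $G$ upper bounds metric distance: since every edge $\{u,v\}$ satisfies $dist(u,v)\le 1$, any cell within $k$ hops of $c$ lies inside the closed metric ball $B(c,k)$ of radius $k$ centered at $c$'s embedded position. It therefore suffices to bound $|I \cap B(c,k)|$, and the whole lemma reduces to counting how many independent-set members can be packed into a radius-$k$ ball.

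First I would apply the doubling property repeatedly to cover $B(c,k)$ by small balls. Each application of the definition replaces a ball of radius $R$ by at most $2^{\rho}$ balls of radius $R/2$. After $i = \lceil \log_2(2k)\rceil$ iterations the radius has shrunk to at most $1/2$, and $B(c,k)$ is covered by at most $(2^{\rho})^{i} = 2^{\rho\lceil\log_2(2k)\rceil} \le 4^{\rho}\,k^{\rho}$ balls, each of radius at most $1/2$.

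The second step is the independence argument. Any two points lying in a common ball of radius $1/2$ are at metric distance at most $1$, and hence are neighbors in $G$ by the UBG edge condition. Since $I$ is an independent set, no two of its members can be neighbors, so each covering ball contains at most one cell of $I$. Combining this with the covering bound yields $|I\cap B(c,k)| \le 4^{\rho}\,k^{\rho} = O(k^{\rho})$, where the final step uses the assumption that the doubling dimension $\rho$ is a constant so that $4^{\rho}$ is absorbed into the hidden constant.

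This is essentially the standard doubling-dimension packing bound, so I do not anticipate a serious conceptual obstacle. The one point requiring care is the bookkeeping between the two notions of distance: the statement quantifies ``within distance $k$'' in the graph (hops), whereas the covering and packing steps live in the metric space (radii). Keeping straight the inequality $\text{hop-distance}(u,c) \ge \text{metric-distance}(u,c)$, and confirming that a radius-$1/2$ ball really does force adjacency (its diameter is at most $1$, which matches the non-strict edge threshold $dist \le 1$), is the only place where an off-by-a-constant slip could creep in.
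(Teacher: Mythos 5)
Your proposal is correct and follows essentially the same argument as the paper: reduce hop-distance to metric distance via the embedding, cover the radius-$k$ ball by $O(k^{\rho})$ balls of radius $1/2$ using repeated application of the doubling property, and observe that independence permits at most one cell of $I$ per radius-$1/2$ ball. The only difference is a harmless constant ($4^{\rho}k^{\rho}$ versus the paper's $(2k)^{\rho}$), which is absorbed into the $O(\cdot)$ since $\rho$ is constant.
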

\begin{proof}
Fix an independent set $I$ and a cell $c$.  Let $\rho$ be the doubling dimension of the graph.  Recall that $c$ is embedded into a metric space, and hence we can consider the ball $B$ of radius $k$ around $c$.  This ball contains all the cells within $k$ hops of $c$ because neighbors in the graph have distance at most $1$.  

This ball $B$ is covered by at most $(2k)^{\rho}$ balls of radius $1/2$.  This can be seen inductively: for the base case, a ball of radius $1$ is covered by at most $2^{\rho}$ balls of radius $1/2$; for the inductive step, a ball of radius $k$ is covered by at most $2^{\rho}$ balls of radius $k/2$, each of which is covered by at most $(2(k/2))^{\rho}$ balls of radius $1/2$, i.e., $(2k)^{\rho}$ balls of radius $1/2$ in total.

Let $S$ be a set of at most $(2k)^{\rho}$ balls of radius $1/2$ that cover $B$.  Notice that each of these balls can contain at most one cell in the independent set $I$: any two cells in a ball of radius $1/2$ are within distance $1$ of each other and hence must be neighbors.  Each cell in $I$ that is within distance $k$ of $c$ lies in one of the balls in $S$, and to there are at most $(2k)^{\rho} = O(k^{\rho})$ such cells.
\end{proof}

We can now show that for every cell $c$, it either joins the MIS or has a neighbor join the MIS within a fixed amount of time:
\begin{lemma}
\label{lem:liveness}
Consider a network of $n \geq 1$ {\tt KnockBack} cells connected in a unit ball graph $G$ with constant doubling dimension and maximum degree $\Delta$. 
For any $\epsilon > 0$, for every cell $c$: within time $O(\polylog(\Delta)\log(1/\epsilon)$, either $c$ or a neighbor of $c$ enters the MIS.
\end{lemma}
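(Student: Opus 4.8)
The plan is to bound the time until cell $c$ becomes \emph{inactive}, since by definition $c$ becomes inactive exactly when $c$ itself or one of its neighbors enters the MIS. I would divide the execution into consecutive \emph{epochs}, each of length $T = O(\log\Delta)$ matching the round bound of Lemma~\ref{lem:smis:nearby}. As long as $c$ is still active at the start of an epoch, I invoke Lemma~\ref{lem:smis:nearby} at that round to conclude that, with probability at least $1/16$, some cell within distance $k = O(\log\Delta)$ of $c$ enters the MIS during that epoch. The whole argument then hinges on combining this per-epoch progress guarantee with a cap, coming from the graph topology, on how many such progress events can ever occur near $c$.

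For the cap, I would observe that the set of all cells that \emph{ever} enter the MIS is an independent set: by Lemma~\ref{lem:safety} (itself a consequence of Lemma~\ref{lem:knockout}) no two neighbors are ever both in the MIS, and MIS membership is permanent. Applying Lemma~\ref{lem:ubg} to this independent set with $k = O(\log\Delta)$, the number of MIS cells that ever lie within distance $k$ of $c$ is at most $N = O(k^{\rho}) = O((\log\Delta)^{\rho}) = \polylog(\Delta)$, using that the doubling dimension $\rho$ is a constant. Since a cell within distance $k$ of $c$ in the active subgraph is also within distance $k$ in $G$, and since each successful epoch contributes a \emph{distinct} newly joined MIS cell within distance $k$, at most $N$ epochs can be successful while $c$ remains active.

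I would then finish with a concentration argument. Let $M$ be the number of epochs. Because Lemma~\ref{lem:smis:nearby} holds for an arbitrary round regardless of the execution history, the per-epoch success probability is at least $1/16$ conditioned on everything that happened earlier; a standard stochastic-dominance coupling then shows that the number of successful epochs among the first $M$ stochastically dominates a $\mathrm{Binomial}(M,1/16)$ random variable. Choosing $M = O(N + \log(1/\epsilon))$ and applying a Chernoff bound makes the probability of seeing at most $N$ successes in $M$ epochs at most $\epsilon$. If $c$ were still active after $M$ epochs we would have had at most $N$ successes, an event of probability $\le \epsilon$; hence with probability at least $1-\epsilon$ cell $c$ becomes inactive within $M\cdot T = O((\polylog(\Delta)+\log(1/\epsilon))\log\Delta) = O(\polylog(\Delta)\log(1/\epsilon))$ rounds, which is exactly the claim.

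The main obstacle I anticipate is the probabilistic bookkeeping rather than any single hard estimate: I must set up the stochastic dominance so that the $1/16$ lower bound can legitimately be reapplied afresh at the start of each epoch (invoking Lemma~\ref{lem:smis:nearby} at the current round), and I must argue that successful epochs genuinely produce \emph{distinct} MIS cells, so that the topological cap $N$ really limits the number of successes while $c$ stays active. Care is also needed so that each invocation's $O(\log\Delta)$-round entry window fits inside its epoch, and so that the constant doubling dimension is used to absorb the $(\log\Delta)^{\rho}$ factor into $\polylog(\Delta)$.
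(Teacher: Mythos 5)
Your proposal is correct and follows essentially the same route as the paper's proof: phases of length $O(\log\Delta)$ driven by Lemma~\ref{lem:smis:nearby}, a $\polylog(\Delta)$ cap on successful phases via Lemma~\ref{lem:ubg} applied to the (permanent, independent) set of MIS cells near $c$, and a stochastic-dominance-plus-Chernoff argument to bound the number of phases needed. The only cosmetic difference is your slightly sharper choice of $M = O(\polylog(\Delta) + \log(1/\epsilon))$ phases versus the paper's product form, both of which yield the stated $O(\polylog(\Delta)\log(1/\epsilon))$ bound.
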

\begin{proof}
%Fix some cell $c$. For the special case of $n=1$, we can apply the argument for this case from the proof of Theorem~\ref{thm:le:liveness} in Section~\ref{sec:le}. This establishes that $c$ elects itself leader with sufficiently high probability in $O(1)$ rounds, which trivially satisfies our theorem statement.
%
%Moving forward, we assume $n> 1$. 

First, partition rounds into {\em phases} of length $t = O(\log{\Delta})$, where the constant is large enough for Lemma~\ref{lem:joinmis} to hold.  Label these epochs $1,2,3, \ldots$, and so on. 
By Lemma~\ref{lem:smis:nearby}, for each phase $i$, if $c$ is active at the start of $i$, then with probability $p \geq 1/16$, an active cell $c'$ within
$O(\log \Delta)$ hops of $c$ joins the MIS in this phase. We call a phase {\em successful} with respect to $c$ if this event occurs.

By Lemma~\ref{lem:ubg}, we know that there can be at most $\gamma = O(\polylog(\Delta))$ cells that join the MIS within distance $O(\log{\Delta})$ of $c$.  Thus there can be at most $\gamma$ successful phases before $c$ becomes inactive.

Let random variable $X_i$ be a random indicator variable that evaluates
to $1$ with probability $1/16$, and otherwise $0$. Let $Y_k = \sum_{i=1}^k X_i$.
Let $p_k$ be the probability that we have at least $\gamma$ successful phases with respect
to $c$ in the first $k$ phases.
A standard stochastic dominance
 argument establishes that $p_k > \Pr(Y_k \geq \gamma)$.

Consider $k= 128\gamma \ln(1/\epsilon)$.  For this value, $E(Y_k) = 8 \cdot \gamma\ln(1/\epsilon)$.
Applying a Chernoff bound, it follows:

\begin{eqnarray*}
\Pr(Y_k \leq \gamma) & \leq & \Pr(Y_k \leq E(Y_k)/2) \\
& \leq & e^{-E(Y_k)/8} \\
& \leq & e^{-8\gamma \ln(1/\epsilon)/8} \\
& \leq & \epsilon
\end{eqnarray*}

Therefore, the probability that $c$ remains active for more than $k$ phases
is less than $\epsilon$.  To conclude the proof, we note that there are $\Theta(\log{\Delta})$
rounds per phase, and $k = \Theta(\log(\Delta)\ln{n})$,
yielding $O(\polylog{\Delta}\ln{n})$ rounds.
\end{proof}

Putting these pieces together, we get the main theorem:
\begin{theorem}
Consider a network of $n \geq 1$ {\tt KnockBack} cells connected in a unit ball graph $G$ with constant doubling dimension and maximum degree $\Delta$.  With
probability at least $1 - 1/n$, every cell in inactive within time $O(\polylog(\Delta)\log(n))$, and the resulting cells with potential $2$ form an MIS.
\end{theorem}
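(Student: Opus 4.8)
The plan is to assemble the theorem from two already-established facts: the per-cell liveness guarantee of Lemma~\ref{lem:liveness} and the safety guarantee of Lemma~\ref{lem:safety}. The statement has two halves---a round bound guaranteeing universal inactivity, and a structural claim that the high-potential cells form an MIS---and I would treat them in turn.

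For the round bound, I would invoke Lemma~\ref{lem:liveness} with the error parameter set to $\epsilon = 1/n^2$. That lemma guarantees that any fixed cell $c$ becomes inactive within $O(\polylog(\Delta)\log(1/\epsilon))$ rounds except with probability $\epsilon$; substituting $\epsilon = 1/n^2$ gives $\log(1/\epsilon) = 2\log n = O(\log n)$, so each fixed cell is inactive within $O(\polylog(\Delta)\log n)$ rounds with failure probability at most $1/n^2$. Because this round bound is uniform across all cells, a union bound over the $n$ cells then yields that, with probability at least $1 - n\cdot(1/n^2) = 1 - 1/n$, every cell is simultaneously inactive within $O(\polylog(\Delta)\log n)$ rounds.

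For the structural claim, I would check the two defining MIS properties directly against the notion of inactivity. Independence is immediate from Lemma~\ref{lem:safety}: a leader has potential $\geq 2 > 1.5$, and that lemma rules out two neighboring cells ever simultaneously exceeding $1.5$, so no two adjacent cells are both leaders. Maximality is built into the definition of inactivity---a cell is inactive exactly when it has entered the MIS or one of its neighbors has---so once every cell is inactive, every cell is either a leader or adjacent to one. (Lemma~\ref{lem:knockout} additionally ensures that leaders and their suppressed neighbors never change status, so the configuration is permanent.) Together these give that the potential-$\geq 2$ cells form a maximal independent set.

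Almost all of the difficulty has been front-loaded into the earlier lemmas, so the main theorem is largely a bookkeeping combination; the only subtle point is the calibration of the union bound. Choosing $\epsilon = 1/n^2$ rather than $1/n$ is what preserves an overall failure probability of $1/n$ after summing over all $n$ cells, and I would confirm that the resulting factor-of-two inflation of $\log(1/\epsilon)$ is harmlessly absorbed into the asymptotic notation. I do not anticipate a genuine obstacle once Lemmas~\ref{lem:liveness} and~\ref{lem:safety} are available.
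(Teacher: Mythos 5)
Your proposal matches the paper's proof essentially verbatim: both set $\epsilon = 1/n^2$ in Lemma~\ref{lem:liveness}, union bound over the $n$ cells to get success probability $1-1/n$, and then invoke Lemma~\ref{lem:safety} for independence while maximality follows from the definition of inactivity. The argument is correct and takes the same route as the paper.
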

\begin{proof}
Setting $\epsilon = 1/n^2$, we conclude from Lemma~\ref{lem:liveness} that a cell becomes inactive within $O(\polylog(\Delta)\log(n))$ rounds with probability at least $1 - 1/n^2$.  Taking a union bound, we conclude that every cell is inactive within $O(\polylog(\Delta)\log(n))$ rounds with probability at least $1 - 1/n$.  By Lemma~\ref{lem:safety} we know that no two neighboring cells join the MIS, and so we conclude that the result is a correct MIS.
\end{proof}

\paragraph{Stabilization.} Throughout the analysis above, we assumed for simplicity that all the cells began with potential precisely zero.  However, it turns out that is not in fact necessary.  Here we discuss the behavior of a network of cells when they begin with arbitrary potentials.

The first issue to address is that throughout the analysis above, we discussed fixed quantized potentials: 0, 0.5, 1, 1.5, etc.  Starting with a potential of zero, it is impossible to arrive at a potential that is not a multiple of 0.5.  This quantization, however, is not important---what matters is the range that the potential is in.  All cells with potential in the range $[0.5, 1)$ act identically, as do cells with potential in the range $[1, 1.5)$, etc.  Thus, for the purpose of analysis, we can treat the potential as ``rounded down'' to the nearest multiple of $0.5$.  (For example, when deciding if a cell with potential 0.7 is a local maximum, we round its potential down to 0.5 and compare its potential accordingly.)

The second issue to address is that potentials may begin too low, e.g., $< -3$.  There are basically two ways that this can resolve itself: either a neighbor enters the MIS, or eventually the potential climbs into the normal range (due to the gradient effect).  In either case, the situation eventually resolves itself.

The third issue is that potentials may begin too high.  Multiple cells may begin with potential 2, i.e., part of the MIS (or with potential even higher, if that is feasible in the system).  Anytime there are two neighbors with potential $> 1$, they will continue to broadcast in every round and hence eventually one or both will exit the MIS, with their potential dropping below $2$.  Once safety has been restored, i.e., no neighbors are in the MIS, then the system will stabilize as already described.

Finally, observe that nowhere in the analysis did we depend on any special initial conditions or relations between the potentials.  The key idea was to identify local maxima, and argue that they have a reasonable probability of entering the MIS.  Nothing about the initial conditions matter here.  Thus we conclude:

\begin{theorem}
Consider a network of $n \geq 1$ {\tt KnockBack} cells connected in a unit ball graph $G$ with constant doubling dimension and maximum degree $\delta$.  Assume that the cells begin with arbitrary potentials.  Then eventually, with probability 1: no two neighboring cells are in the MIS, and every cell is either in the MIS or has a neighbor in the MIS.
\end{theorem}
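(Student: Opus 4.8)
The plan is to show that the three features that made the fixed-start analysis of Section~\ref{sec:sym:mis} go through---a common quantization grid, an initially safe configuration, and potentials lying in the ``normal'' range---are all either inessential or self-correcting, after which the liveness machinery already developed transfers essentially verbatim. I would first dispose of quantization. Both the firing function $f$ and the membrane function $g$ depend only on which interval the current potential lies in (the breakpoints are at $1/2$, $1$, and the equilibrium-relative thresholds used by the gradient rule), and every induced offset is a multiple of $1/2$ except for the single gradient step that snaps a potential lying within $\lambda=1/2$ of $\sigma=2$ exactly onto $2$. Consequently, treating each potential as rounded down to the nearest multiple of $1/2$ yields the same band membership, and hence the same firing, receiving, and (up to the harmless snap-to-equilibrium) gradient behaviour as in the quantized analysis; I would fix this reduction once at the outset and then reason about rounded potentials.

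Next I would restore safety. Call a configuration \emph{safe} if no two neighbours simultaneously have (rounded) potential $>1$. The key computation is that whenever two neighbouring cells both have potential $>1$, both broadcast (since $f(x\geq 1)=1$), both therefore receive a ligand, and the net change to each is $+1/2$ (send) $-3/2$ (receive) plus a gradient term lying in $[-1/2,+1/2]$, hence at most $-1/2$. So both strictly decrease every round, and such a mutually-over-threshold pair cannot persist: within a bounded number of rounds at least one member drops to potential $\leq 1$. Iterating over the finitely many violating pairs shows the configuration becomes safe after finitely many rounds, while Lemma~\ref{lem:knockout} guarantees that a cell entering the high band \emph{cleanly} (by broadcasting with no neighbour broadcasting) permanently suppresses its neighbours, so no new violation is ever created. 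This establishes the independence half of the theorem and disposes of the ``potentials begin too high'' case.

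It then remains to argue liveness from an arbitrary but eventually-safe configuration. Cells whose potential begins below the normal range climb by $+1/2$ per round under the gradient (while un-suppressed and not receiving), so each such cell either re-enters the normal range in finitely many rounds or is absorbed as a non-leader once a neighbour enters the MIS; either way it ceases to be an obstacle. The crucial observation---which I would verify explicitly, as it is the heart of the transfer---is that the liveness lemmas make no use whatsoever of the initial potentials: Lemma~\ref{lem:find-quick} (a quick-entry cell exists within $O(\log\Delta)$ hops), Lemma~\ref{lem:joinmis} (a quick-entry cell or a neighbour enters with probability $\geq 1/16$ within $O(\log\Delta)$ rounds), and the packing bound Lemma~\ref{lem:ubg} all reason only about the current active subgraph and the band-structured dynamics, so they apply unchanged to any reachable safe configuration.

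With these pieces in hand the conclusion is a standard infinite-trials argument. Partitioning time into phases of $O(\log\Delta)$ rounds, each still-active cell $c$ has, by Lemma~\ref{lem:smis:nearby}, probability at least $1/16$ per phase of witnessing an MIS entry within $O(\log\Delta)$ hops, while by Lemma~\ref{lem:ubg} only $O(\polylog\Delta)$ such entries can occur before $c$ itself becomes inactive; since the per-phase success probability is bounded below independently of history, a Borel--Cantelli/geometric argument makes $c$ inactive with probability $1$, and a union over the finitely many cells preserves probability $1$. Combined with the safety step this yields, with probability $1$, a configuration in which no two neighbours are in the MIS and every cell is in or adjacent to the MIS. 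I expect the main obstacle to be the safety-restoration step: the descent computation for over-high potentials must be reconciled with the irreversibility of the leader-expression event (so that ``in the MIS'' is read as a potential-band condition during stabilization, with the expression event deferred until the configuration is safe), and one must rule out a pathological oscillation in which a pair perpetually trades places just above the threshold---which the strict per-round decrease of at most $-1/2$ precludes.
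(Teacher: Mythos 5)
Your proposal is correct and follows essentially the same route as the paper's own (much terser) stabilization argument: the same three issues---quantization to the $1/2$-grid, too-low potentials climbing back via the gradient, and mutually over-threshold neighbours both broadcasting and hence strictly decreasing until safety is restored---followed by the observation that the liveness lemmas never used the initial potentials. Your explicit net-change computation for violating pairs and the Borel--Cantelli finish are in fact more careful than the paper's informal discussion, but they formalize the identical approach.
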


%\input{mis}
%%%%%%%%%%%%%%%

\section{Input Type Computation}
\label{sec:type}

We previously studied the ability of a collection of identical cells to
stochastically break symmetry.
We now turn our attention to the ability of cells to process information.
There are different natural definitions for information processing in this setting.
We begin here with a definition studied in the
bio-inspired chemical reaction network and population protocol settings: computation
on input type counts.

In more detail, for these problems
the {\em input} to the computation is the {\em a priori} unknown counts of the different cell types in
the system.
We look at two common problems from the input type computation literature (see Section~\ref{sec:intro}): threshold
and majority detection.
We produce probabilistic approximate solutions in both cases,
and prove for threshold detection a lower bound that says a non-zero error probability is necessary
for general solutions, even when considering approximate versions of the problem.
This bound is easily adapted to majority detection as well,
implying that although these problems are tractable in the CBM,
they require randomized solutions.

\subsection{Threshold Detection}
\label{sec:type:threshold}

We begin with the threshold detection problem.
The goal is to count if the number of cells of a designated type is beyond some fixed threshold.
We work with the following general definition that allows us to study various problem parameters.

\begin{definition}
The {\em $(k, \tau, \epsilon)$-threshold detection} problem is parameterized with an integer threshold $k\geq 1$,
an integer threshold range  $\tau \geq 1$, and a fractional error bound  $\epsilon > 0$.
A cell definition {\em solves} the $(k, \tau, \epsilon)$-threshold detection problem in $T$ rounds,
if it guarantees for any network size $n \geq 1$, that
when $n$ copies of the cell are run in a fully connected network topology,
by round $T$ the following is true with probability at least $1-\epsilon$:
\begin{itemize}
	\item If $n > \tau\cdot k$, at least one cell executes a {\em threshold exceeded} expression event.
	\item If $n \leq k/\tau$, no cell executes a {\em threshold exceeded} event. 
\end{itemize}
\label{def:threshold}
\end{definition}

We say a cell type solving this problem is {\em deterministic} if and only if the probabilities in the codomain
of the cell firing functions are $0$ and $1$.
The traditional (and strictest) definition of this problem assumes deterministic solutions for $\tau = 1$ and $\epsilon = 0$, that is,
with no probability of error, it detects if the cell count is greater than $k$.

Below, we begin with a straightforward one round deterministic solution for the traditional
parameters $\tau =1$ and $\epsilon = 0$, that works only if the binding
constant for the cell is at least $k$ (recall, as defined in Section~\ref{sec:model},
the binding constant is
the maximum number of incoming ligand counts the membrane function can distinguish).
We then tackle the case of general $k$ values, and present a solution
that for any error bound $\epsilon >0$, solves the $(k, 8\ln{(1/\epsilon)}, \epsilon)$-threshold detection problem.
Finally, we prove that the assumption of a non-zero error probability is necessary to solve this problem for arbitrary $k$.

%In more detail, we prove that for any $\tau \geq 1$, sufficiently large $k$,
%and time bound $T$,
%there does not exist an algorithm that guarantees to solve the $(k,\tau,0)$-threshold detection problem in $T$ rounds.
%That is, no matter how generous you make the threshold range,
%the problem cannot be solve with no error probability.

\subsubsection{Deterministic Solution for Small Thresholds}
We begin with the easy case where  $k \leq b$,
where $b$ is the maximum binding constant we are allowed to use in defining our cell.
In this case, the cell's membrane function can distinguish between $1, 2, ..., k$ {\em or more},
incoming ligands of a given type. By simply having every cell send a ligand in round $1$
with probability $1$, 
all cells can directly count if there are $k$ or more other cells in the system.
We capture this strategy with the following {\tt SmallThreshold}$(k)$ cell definition,
which is parameterized with the threshold $k$ it is tasked with detecting:

\begin{center}
\begin{tabular}{|cc|}
\hline
\multicolumn{2}{|c|}{{\tt SmallThreshold}$(k)$ cell definition} \\
\hline
\hline
$q_0 = 1$ & ${\cal B} = \{(f, (0, m))\}$, where: \\
$\lambda = 1$, $\sigma= 0$ & \multicolumn{1}{r|}{$f(x \geq 1) = 1$}\\
%\begin{itemize}
%	\item $f(x < 1/2) = 0$
%	\item $f(1/2 \leq x < 1) = 1/2$
%	\item $f(x \geq 1) = 1$
%\end{itemize} 
%
$g(|M| \geq k) = 2$ & \multicolumn{1}{r|}{$f(x < 1) = 0$}\\
$g(|M| < k) = 0$ & \\
\hline
\multicolumn{2}{|c|}{threshold exceeded event threshold: $2$}\\
\hline
\end{tabular}
\end{center}

\noindent It is straightforward to establish the correctness of this strategy:

\begin{theorem}
Fix any threshold $k \geq 1$ that is smaller than the maximum allowable cell binding constant.
The deterministic {\tt SmallThreshold}$(k)$ cell definition solves the $(k, 0, 0)$-threshold detection problem
in one round.
\end{theorem}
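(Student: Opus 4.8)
The plan is to exploit the complete symmetry of the system: since all $n$ copies of {\tt SmallThreshold}$(k)$ are identical and the topology is fully connected, every cell undergoes exactly the same (deterministic) behavior, so it suffices to trace the potential of a single representative cell $c$ through the five-step round procedure of Section~\ref{sec:model}. All firing functions take values in $\{0,1\}$ and no randomness enters anywhere else, so the entire argument collapses to a short deterministic case analysis on whether $n > k$ or $n \leq k$, matching it against the two conditions of Definition~\ref{def:threshold} (which, for the exact $\tau=1,\epsilon=0$ version, require at least one cell to fire the threshold-exceeded event exactly when $n > k$).

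First I would record that every cell begins round $1$ at $C_1(c) = q_0 = 1$. Evaluating the single bioelectric event gives $f(C_1(c)) = f(1) = 1$, so every cell fires deterministically; the offset is $0$, so firing does not move the potential, but each cell emits the ligand $m$ to all neighbors. In the fully connected topology, each cell therefore collects exactly $n-1$ copies of $m$, i.e.\ $|M_c| = n-1$. Next I apply the membrane function: $g(M_c) = 2$ precisely when $n-1 \geq k$ (equivalently $n > k$), and $g(M_c) = 0$ otherwise. The gradient step then uses $z = C_1(c) - \sigma = 1 = \lambda$, which lands in the $z \geq \lambda$ case and contributes $-1$ (note this is computed from the start-of-round value, so the membrane update does not affect it). Summing the contributions, and taking $\omega \leq 0$ so the final $\max$ does not clamp, the start-of-round-$2$ potential is $C_2(c) = 1 + 2 - 1 = 2$ when $n > k$ and $C_2(c) = 1 + 0 - 1 = 0$ when $n \leq k$. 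Since the expression-event check runs at the beginning of each round and its threshold is $2$, the threshold-exceeded event fires at every cell at the start of round $2$ exactly when $n > k$, and at no cell when $n \leq k$. This justifies the ``one round'' claim: a single round of interaction produces a configuration $C_2$ that encodes the answer, which is then read off by the expression check.

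The one genuinely essential hypothesis — and the only place the argument could fail — is that $k$ not exceed the cell's binding bound $b$. The membrane function must distinguish ``received at least $k$ copies of $m$'' from ``received fewer than $k$'', yet Constraint~\#3 only permits a membrane function to discriminate ligand counts up to $b$; the stated hypothesis $k \leq b$ is exactly what makes $g$ a legal membrane function, and is precisely why this clean solution is confined to small thresholds. Beyond that, the remaining care is purely in the boundary counting, which I would verify explicitly: at $n = k$ each cell receives $k-1 < k$ copies (no firing, consistent with $n \leq k$), while at $n = k+1$ each receives $k \geq k$ copies (firing, consistent with $n > k$), confirming the threshold is placed correctly. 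There is no substantive obstacle here; the result is a deterministic single-round trace, and the main thing to get right is simply the off-by-one in the ligand count together with the binding-bound requirement.
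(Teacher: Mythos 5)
Your proposal is correct and follows essentially the same argument as the paper's proof: trace the single deterministic round, observe that every cell fires and hence receives $n-1$ ligands, and split on whether $n-1 \geq k$ to get a final potential of $1+2-1=2$ or $1+0-1=0$. Your version is somewhat more careful than the paper's (making the off-by-one in the ligand count, the gradient case $z = \lambda$, and the role of the binding bound explicit), but there is no difference in approach.
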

\begin{proof}
The {\tt SmallThreshold} definition starts each cell at potential value $1$. During the first round, all cells fire their single event with probability $1$.
We consider two cases for $n$, the number of cells in the single hop network.
 If $n > k$, then all cells receive at least $k$ ligands (i.e., one ligand from the $n-1 \geq k$ other cells).
 Accordingly, the membrane function will add $2$ to each cell's potential. 
 Because the equilibrium $\sigma = 0$ and the gradient rate $\lambda=1$, 
 each cell will also reduce its potential by $1$ as ion flux pushes it back toward the equilibrium.
 The result in this case is that each cell ends up with potential $1+2-1 = 2$, correctly triggering the expression event.
 On the other hand, if $n \leq k$, then the membrane function does not impact the potential values,
 and each cell ends up at the equilibrium potential $1-1 = 0$. Because the firing probability is $0$ at this potential
 value, the cells will  correctly remain at this potential forever.
 \end{proof}

\subsubsection{Probabilistic Solution for Arbitrary Thresholds}
We now consider the general case where $k$ might be large compared to the cell binding constant.
We can no longer directly count the number of neighbors using the membrane function.
If we tolerate a non-zero error probability and some looseness in the threshold range (as defined by $\tau$), 
we can solve this problem probabilistically, trading off increases in $\tau$ for decreases in error.

Here we study  {\tt GeneralThreshold}$(k)$,
a natural generalization of {\tt SmallThreshold}$(k)$.
With this new definition, instead of having all cells fire with probability $1$
in the first round, we have them instead fire with probability $1/k$. 
If {\em any} cell fires, it moves itself past the event threshold, otherwise, the system falls back to a quiescent equilibrium.
In more detail:

\begin{center}
\begin{tabular}{|cc|}
\hline
\multicolumn{2}{|c|}{{\tt GeneralThreshold}$(k)$ cell definition} \\
\hline
\hline
$q_0 = 1$ & ${\cal B} = \{(f, (2, m))\}$, where: \\
$\lambda = 1$, $\sigma= 0$ & \multicolumn{1}{r|}{$f(x \geq 1) = 1/k$}\\
%\begin{itemize}
%	\item $f(x < 1/2) = 0$
%	\item $f(1/2 \leq x < 1) = 1/2$
%	\item $f(x \geq 1) = 1$
%\end{itemize} 
%
$g(|M| \geq 1) = 2$ &  \multicolumn{1}{r|}{$f(x < 1) = 0$} \\
$g(|M| < 1) = 0$ & \\
\hline
\multicolumn{2}{|c|}{event threshold: $2$}\\
\hline
\end{tabular}
\end{center}

\noindent It is obvious that {\tt GeneralThreshold}$(k)$ exhibits generally correct behavior: if $n$ is at least $k$, then we expect
cells to fire in the first round, and if $n$ is sufficiently small relative to $k$, then we do not expect any cell to fire in the first round.
The following theorem more precisely captures the trade-off between the threshold range and the error probability:

\begin{theorem}
Fix any error bound $\epsilon, 0 < \epsilon < 1$ and threshold $k \geq 1$.
The {\tt GeneralThreshold}$(k)$ cell definition solves the $(k, 8\ln{(1/\epsilon)}, \epsilon)$-threshold detection problem
in one round. 
\end{theorem}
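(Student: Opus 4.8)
The plan is to separate a purely \emph{structural} (deterministic) analysis of the single round of activity from a short \emph{probabilistic} analysis of how many cells fire. First I would trace the round-$1$ dynamics of {\tt GeneralThreshold}$(k)$ exactly as prescribed by the execution semantics of Section~\ref{sec:model}. Every cell starts at potential $q_0 = 1$, so since $f(x) = 1/k$ for $x \geq 1$, each cell fires its single event independently with probability $1/k$, applying offset $+2$ and emitting $m$ to all of its (single-hop) neighbors. A cell that fires reaches $1 + 2 = 3$ after the event, gains $+0$ or $+2$ from its membrane function according to whether some \emph{other} cell fired, and then loses $1$ to the gradient (since $z = q_0 - \sigma = 1 = \lambda$), landing at potential $2$ or $4$; a cell that does not fire lands at $0$ or $2$ by the same accounting. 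The key fact I would record is a clean equivalence: the \emph{threshold exceeded} event triggers at the start of round $2$ if and only if at least one cell fires in round $1$, and if no cell fires then all cells settle at potential $0$, where $f(0) = 0$, so no event ever fires thereafter.

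This reduces the whole theorem to controlling $X \sim \mathrm{Bin}(n, 1/k)$, the number of cells that fire, since a threshold-exceeded event occurs exactly when $X \geq 1$. For the positive-detection side ($n > \tau k$ with $\tau = 8\ln(1/\epsilon)$) I would bound the false-negative probability $\Pr[X = 0]$. Here $E[X] = n/k > \tau$, and I would apply a Chernoff bound of the form $\Pr[X \le E[X]/2] \le \exp(-E[X]/8)$ to get $\Pr[X = 0] \le \exp(-\tau/8) = \exp(-\ln(1/\epsilon)) = \epsilon$; this is precisely where the constant $8$ inside $\tau$ is consumed, mirroring the Chernoff step used in the leader-election analysis. (The cruder estimate $\Pr[X=0] = (1-1/k)^n \le \exp(-n/k) < \exp(-\tau) = \epsilon^{8}$ already suffices with room to spare.) Either way this side stays comfortably within the $\epsilon$ budget.

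The delicate side---and the one I expect to be the main obstacle---is the no-false-positive side ($n \le k/\tau$), where I must bound $\Pr[X \ge 1]$. The natural tool is Markov's inequality (equivalently a union bound), giving $\Pr[X \ge 1] \le E[X] = n/k \le 1/\tau = 1/(8\ln(1/\epsilon))$. The difficulty is that this bound decays only \emph{linearly} in $n/k$, in sharp contrast to the exponential decay available on the positive side, so the false-positive probability at the boundary $n = k/\tau$ is inherently of order $1/\tau$ rather than $\epsilon$. Reconciling this with an $\epsilon$-error guarantee is the crux of the argument: I would check carefully whether the logarithmic scaling $\tau = \Theta(\log(1/\epsilon))$ genuinely suffices on this side, and I expect to find that pushing the false-positive probability all the way down to $\epsilon$ forces an asymmetric tradeoff (effectively $\tau$ behaving like $1/\epsilon$, or an error guarantee that is only $\Theta(1/\tau)$, on the lower threshold). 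Determining which of these holds, and matching the stated constants on each side of the threshold, is the step that demands the most care; the structural equivalence and the positive-side Chernoff bound are routine by comparison.
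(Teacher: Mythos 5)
Your structural reduction and your treatment of the $n > \tau k$ case coincide with the paper's: its proof also reduces everything to the event $X \geq 1$, where $X$ counts the cells that fire in round $1$, and it dispatches the false-negative side with a Chernoff bound of the form $\Pr[X \leq E(X)/2] \leq \exp(-E(X)/8) \leq \exp(-\tau/8) = \epsilon$. Your direct estimate $\Pr[X=0] = (1-1/k)^n \leq e^{-n/k} < e^{-\tau} = \epsilon^{8}$ is cleaner and gives the same conclusion with room to spare.

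Your suspicion about the $n \leq k/\tau$ side is correct, and the obstacle you identify is not a defect of your proposal but of the theorem. The paper's proof of that case sets $\delta = \sqrt{24}\ln(1/\epsilon)$ and invokes $\Pr[X \geq (1+\delta)E(X)] \leq \exp(-\delta^2 E(X)/3)$, a form of the multiplicative Chernoff bound valid only for $\delta \leq 1$; it then substitutes $E(X) \leq 1/\tau$ into the exponent in the wrong direction (a smaller mean makes $\exp(-\delta^2 E(X)/3)$ larger, not smaller). And the statement itself fails for small $\epsilon$: at $n = k/\tau$ one has $\Pr[X \geq 1] = 1 - (1-1/k)^n \approx 1 - e^{-1/\tau} \approx 1/\tau = 1/\bigl(8\ln(1/\epsilon)\bigr)$, which exceeds $\epsilon$ whenever $\epsilon\ln(1/\epsilon) < 1/8$; concretely, $\epsilon = 0.01$, $k = 36840$, $n = 1000 \leq k/\tau$ yields a false-positive probability of about $0.027 > \epsilon$. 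Your diagnosis of the repair is also right: since $\Pr[X \geq 1] \leq E[X] = n/k \leq 1/\tau$ is essentially tight in this regime, guaranteeing error $\epsilon$ below the threshold forces $\tau = \Theta(1/\epsilon)$ on that side (or one must settle for an error guarantee of $O(1/\tau)$ there). The theorem needs to be restated with such an asymmetric range---logarithmic in $1/\epsilon$ above the threshold, linear in $1/\epsilon$ below---before either your argument or the paper's can be completed.
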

\begin{proof}
Fix some error bound $\epsilon$, threshold $k$, and network size $n$.
To simplify notation, let $t = 8\ln{(1/\epsilon)}$.
Fix some random indicator variable for each cell that evaluates to $1$ if that cell fires in the first round and otherwise evaluates to $0$. 
Let $X$  the sum of these random variables.
If $X \geq 1$, all cells will execute the relevant expression event, otherwise, no cell will ever execute this event.

We first consider the case that $n > tk$. In this case, we require the expression event is executed. Therefore, we will show that
if $n > tk$ then $X \geq 1$ with probability at least $1-\epsilon$.
Because $X$ is the sum of independent random indicator variables,
and $E(X) > t$,
we can establish this claim with the following form of a Chernoff bound:

\[\Pr[ X \leq (1 - \delta)E(X)]  \leq  \exp\big( - (\delta^2 (E(X)/2)   \big) = \exp\big( - (\delta^2 4\ln{(1/\epsilon)} \big).\]

\noindent Notice, for $\delta = 1/2$, $(1-\delta)E(X) > 1$, and the error probability is upper bounded by $\exp(-\ln{(1/\epsilon)}) = \epsilon$,
as required.
Now consider the case where $n \leq k/t$. Here we require that the express event is {\em not} executed.
Therefore, we must show that if $n \leq k/t$, then $X < 1 \Rightarrow X = 0$, with sufficient probability.
For this case, $E(X) \leq 1/t$.
We now apply the following upper bound version of a Chernoff bound:

\[  \Pr[X \geq (1+\delta)E(X) ] \leq \exp\big(  -\frac{\delta^2 E(X)}{3} \big).\]

\noindent If we set $\delta = \sqrt{24}\ln{(1/\epsilon)}$, it follows $(1+\delta) < (1+\sqrt{24})\ln{(1/\epsilon)}$, therefore:

\[(1+\delta)E(X) \leq ((1+\sqrt{24}) \ln{(1/\epsilon)})E(X) < (6\ln{(1/\epsilon)})/(8\ln{(1/\epsilon)}) < 1.\]

If we then plug this value of $\delta$ into our above Chernoff bound form,
it follows that the probability that $X$ is at least as large as the value we bounded above to be less than $1$,
is upper bounded by:

 \[ \exp\big(  -\frac{\delta^2 E(X)}{3} \big) \leq \exp\big(  -\frac{24\ln^2{(1/\epsilon)}}{3t} \big) =\exp\big(  -\frac{24\ln^2{(1/\epsilon)}}{24\ln{(1/\epsilon)}} \big)= \epsilon,\]

\noindent as required.
\end{proof}

A nice property of this solution is that the threshold range is a function only of $\tau$ (and not $k$ or $n$). 
Almost certainly, a more complex strategy could provide a smaller $\tau$ bound for a given $\epsilon$ (perhaps, for example,
by requiring a certain fractions of a fixed length of rounds to include event firings).
As with our study of symmetry breaking, however, we are prioritizing the bio-plausible simplicity
 of {\tt GeneralThreshold}.
 Another potential improvement would be removing the non-zero error bound. In the next section, we prove that optimization impossible.

\subsubsection{Lower Bounds for Threshold Detection}

Our {\tt SmallThreshold} solved the problem with $\epsilon =0$ and a deterministic definition.
It required, however, that the binding bound used in the cell definition was large with respect to the threshold $k$.
Here we prove two results that establish without this condition neither $\epsilon = 0$ nor determinism are possible.
This justifies in some sense the use of probabilistic firing functions and an error bound $\epsilon  > 0$ in 
{\tt GeneralThreshold}.

We begin by establishing the shortcomings of determinism.

\begin{theorem}
Fix a binding bound $b \geq 1$, threshold range $\tau \geq 1$,  non-trivial error probability $\epsilon$, $0 \leq \epsilon < 1/2$, and round length $T \geq 1$.
There does not exist a deterministic cell definition with binding bound $b$ that solves the $(k, \tau, \epsilon)$-threshold detection problem in $T$ rounds
for every threshold $k \geq 1$.
\label{thm:lower:threshold1}
\end{theorem}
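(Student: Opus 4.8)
The plan is to exploit two features of the model at once: the determinism of the cell, which collapses the probabilistic guarantee into a certainty, and the binding bound $b$, which caps how much a cell can learn about the network size. First I would observe that since the firing functions take values in $\{0,1\}$, the single-hop execution of $n$ identical copies of a fixed cell definition is deterministic and, moreover, symmetric: because all cells share the same definition and the topology is fully connected, a routine induction shows every cell holds an identical potential at the start of every round. Consequently the event ``some cell executes a threshold-exceeded expression event by round $T$'' occurs either with probability $1$ or with probability $0$. Since $\epsilon < 1/2$, any cell definition solving the $(k,\tau,\epsilon)$-threshold detection problem must be correct deterministically: the expression event must fire whenever $n > \tau k$ and must never fire whenever $n \leq k/\tau$.

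The heart of the argument is an indistinguishability lemma: for any fixed deterministic cell definition $D$ with binding bound $b$, and any two network sizes $n, n' \geq b+1$, the common potential held by all cells at the start of round $r$ is the same in the $n$-cell and $n'$-cell executions, for every $r$. I would prove this by induction on $r$. The base case is immediate, since all cells start at $q_0$. For the step, suppose the shared potential is $p$ at the start of round $r$ in both executions. The set of events that fire is determined by the values $f_i(p)\in\{0,1\}$ and is therefore identical; each firing contributes the same offset and causes every cell to send the same ligand to all of its $n-1$ (resp.\ $n'-1$) neighbors. Thus each cell receives, for every ligand $s$, a count equal to $n-1$ (resp.\ $n'-1$) times the number of firing events emitting $s$. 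Here the binding bound enters: whenever a ligand is emitted at all, its received count is at least $n-1 \geq b$, so after capping at $b$ (Constraint \#3) the truncated multiset $\hat{M}$ is identical in the two executions, depending only on which ligands were emitted. Hence the membrane contribution $g(\hat{M})$ agrees, as do the gradient adjustment (a function of $p$, $\sigma$, and $\lambda$ alone) and the lower-bound clamp, so the shared potential at the start of round $r+1$ coincides. An immediate corollary is that whether the threshold-exceeded event ever fires within $T$ rounds is a single boolean $O_D$ that is constant over all $n \geq b+1$.

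With the lemma in hand the contradiction is short. Suppose some deterministic $D$ with binding bound $b$ solved the problem for every $k$. Choose any threshold $k \geq \tau(b+1)$. Then $n_1 = b+1$ satisfies $n_1 \leq k/\tau$, so correctness forces $O_D = 0$, while $n_2 = \tau k + 1$ satisfies $n_2 > \tau k$, forcing $O_D = 1$; and both $n_1,n_2 \geq b+1$. This contradicts the constancy of $O_D$ on $\{\,n \geq b+1\,\}$. Hence $D$ already fails for this (sufficiently large) $k$, so no deterministic cell with binding bound $b$ can be correct for every $k$.

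I expect the main obstacle to be stating and proving the indistinguishability lemma cleanly in the presence of several bioelectric events and several ligand types: one must verify that the received multiset, after the binding-bound truncation, genuinely does not depend on $n$ (i.e.\ that every positive multiple of $n-1$ exceeds $b$ once $n \geq b+1$, so distinct network sizes flatten to the same truncated multiset), and that the symmetric, all-cells-equal invariant survives all five steps of a round, including the expression-event check. Everything else, namely the reduction from the probabilistic to the deterministic guarantee and the final counting contradiction, is routine once the lemma is in place.
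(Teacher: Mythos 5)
Your proposal is correct and follows essentially the same route as the paper's proof: determinism plus full symmetry forces all cells to share a potential in every round, and the binding bound makes any two network sizes exceeding $b$ indistinguishable, so a sufficiently large $k$ (namely $k \geq \tau(b+1)$) yields both a ``must fire'' and a ``must not fire'' instance with identical behavior. Your version merely isolates the indistinguishability step as an explicit lemma and picks $n_1 = b+1$ where the paper picks $n_1 = k/\tau - 1$; these are cosmetic differences.
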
 
\begin{proof}
Assume for contradiction there exists a $b$, $\tau$, $\epsilon$, $T$ and cell definition that solves the problem for these parameters for every $k$.
Because the cell is deterministic, a straightforward inductive argument establishes that if all cells start with the same potential then they will have the 
same potential after each round. This follows because if any cell fires an event in a given round, then all cells fire the same event, and therefore,
all cells receive the same number of ligands associated with the event. The resulting offsets to their potential due to firing and the membrane functions are the same.

Consider the behavior of this cell for threshold $k > \tau\cdot (b+1)$ and $n > \tau k$. With probability at least $1-\epsilon > 1/2$,
this system will correctly arrive at a state in the first $T$ rounds in which a cell (and therefore every cell) executes the threshold exceeded event.
Now consider the behavior of this cell for $n=k/\tau - 1$. Notice, because we assumed $k > \tau\cdot (b+1)$, $k/\tau - 1 > b$.
In both the $n > \tau k$ and $n = k\tau - 1$ case, whenever any event fires, all cells will receive more than $b$ copies,
meaning their membrane functions evaluate the {\em $b$ or more case}.
Therefore, in both cases, the two systems will behave the same, meaning that in the second case,
the system will {\em incorrectly} execute the expression even with probability $1-\epsilon > 1/2$, contradicting the assumption of an error bound less than $1/2$.
\end{proof}

\noindent Building on the above result, we now establish the shortcomings of zero error probability, even if we consider non-deterministic solutions:

\begin{theorem}
Fix a binding bound $b \geq 1$, threshold range $\tau \geq 1$, and round length $T \geq 1$.
There does not exist a cell definition with binding bound $b$ that solves the $(k, \tau, 0)$-threshold detection problem in $T$ rounds
for every threshold $k \geq 1$.
\label{thm:lower:threshold2}
\end{theorem}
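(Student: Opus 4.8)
The plan is to reduce the probabilistic, zero-error case to the (essentially deterministic) reasoning already used in Theorem~\ref{thm:lower:threshold1}, by isolating a positive-probability \emph{synchronized} execution in which the cells behave deterministically. Assume for contradiction that some cell definition with binding bound $b$ solves $(k,\tau,0)$-threshold detection in $T$ rounds for every $k$. Since all $n$ cells are identical copies starting at the common potential $q_0$ in a single-hop network, I would first define, for each round $r$, the symmetric outcome in which every cell fires exactly the set $S_r$ of bioelectric events whose firing function evaluates to $1$ at the current common potential, and no other event fires. Because firing decisions are independent across cells, and every event with firing probability strictly below $1$ is simultaneously silent with positive probability while every probability-$1$ event fires deterministically, the event ``the execution is synchronized through round $T$'' has strictly positive probability. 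Conditioned on this event, all cells share a single potential $p_r$ in each round $r$, and the trajectory $p_1 = q_0, p_2, \dots, p_T$ is forced, i.e.\ deterministic.

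The crucial step is to show this synchronized trajectory is \emph{independent of $n$} for all $n \ge b+1$. In a synchronized round every cell fires the same events, so for each emitted ligand $s$ every cell receives the same count, namely $(n-1)$ times the number of firing events that emit $s$; this count is at least $n-1 \ge b$. By the binding-bound constraint (Constraint~\#3), the membrane function $g$ cannot distinguish a count of exactly $b$ from any larger count, so its output on this ligand multiset is the same whether $n = b+1$ or $n$ is arbitrarily large. Since the remaining contributions to $p_{r+1}$---the fixed potential offsets $\delta$ of the fired events and the gradient adjustment, which depends only on $p_r$ and the cell's fixed parameters---are manifestly size-independent, an induction from the common starting value $q_0$ shows that $p_1,\dots,p_T$ is the same sequence for every $n \ge b+1$.

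With these two facts I would close the argument. Fix $k = \tau(b+1)$ and consider two system sizes: $n_A = \tau k + 1 > \tau k$, which must execute a threshold-exceeded event with probability $1$, and $n_B = b+1 = k/\tau$, which must never execute it. Zero error for the $n_A$ system means the set of executions in which no cell ever fires the expression event has probability $0$; since the synchronized event of the $n_A$ system has positive probability, its forced trajectory $p_1,\dots,p_T$ must cross the expression threshold within $T$ rounds. By the $n$-independence just established, this is the very same trajectory that governs the synchronized execution of the $n_B$ system (here $n_B - 1 = b$, so every cell still sees a saturated count). Hence the $n_B$ system also crosses the threshold within $T$ rounds along an execution of strictly positive probability, so it fires the expression event with positive probability---contradicting the requirement that a system of size $n_B \le k/\tau$ never fires.

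The main obstacle, and the only place where care beyond the deterministic proof is needed, is establishing the two properties of the synchronized execution simultaneously: that it has strictly positive probability and that it is genuinely size-independent. Positivity rests on the independence of firing across cells together with the observation that we only ever need sub-probability-$1$ events to be jointly silent (a positive-probability event), never to fire on a prescribed subset. Size-independence is exactly where the binding bound does its work: once $n \ge b+1$, synchrony guarantees every cell sees a ligand count of at least $b$, which the cap in Constraint~\#3 erases down to a common value. A minor point to dispatch is the asymmetry that a sender does not receive its own ligand, but in a synchronized round either all cells fire a given event or none do, so all surviving counts are symmetric at $n-1$ and the asymmetry never arises.
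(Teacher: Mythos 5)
Your proof is correct and follows essentially the same strategy as the paper's: isolate a positive-probability execution in which every cell behaves deterministically and symmetrically, and use the binding bound to show that this execution cannot distinguish $n > \tau k$ from $n \leq k/\tau$. The only differences are cosmetic --- the paper rounds every positive firing probability \emph{up} to $1$ (rather than rounding sub-$1$ probabilities down to $0$ as you do) and then invokes Theorem~\ref{thm:lower:threshold1} as a black box, whereas you re-derive the size-indistinguishability argument inline.
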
 
\begin{proof}
Assume for contradiction there exists a $b$, $\tau$, $\epsilon$, $T$ and cell definition $C$ that solves the problem for these parameters for every $k$.
It follows that for every $k$ and $n$,
every possible probabilistic execution of $C$ (parameterized with $k$) leads to the correct behavior (because the assumed error probability is $0$).
Of the various possible executions is the one in which every time a firing function had a non-zero probability of firing, it fired.
This execution always behaves the same as the deterministic cell $C'$ that transforms every firing function from $C$ into a monotonic unit step function
in which all probabilities greater than $0$ are increased to $1$.
Therefore, this deterministic cell solves the problem with $\epsilon = 0$ for all $k$.
By Theorem~\ref{thm:lower:threshold2}, however,
there must exist some $k$ values for which $C'$ does not satisfy this error bound. A contradiction.
\end{proof}

%%%%%%%%%%%%%%%%%%%%%%%%%%%%
\subsection{Majority Detection}
\label{sec:type:majority}
%%%%%%%%%%%%%%%%%%%%%%%%%%%%

The majority detection problem requires the algorithm designer
to provide two cell definitions: one for cell type $A$ and one for cell type $B$.
Each type has its own expression event that suppresses the other cell type.
An arbitrary number of $A$ and $B$ cells are connected in a fully-connected network. 
We use $n_A$ and $n_B$ to indicate these counts, respectively.
If $n_A > n_B$, the correct outcome is for at least one type $A$ cell to execute its expression
event before any type $B$ cell. If $n_B > n_A$, then the correct outcome is the reverse.
If $n_A = n_B$, any outcome is correct.

Solving this problem correctly when $n_A$ and $n_B$ are very close presents a difficult task.
Following the lead of our study of the threshold detection problem,
and the study of majority detection in other models (e.g.,~\cite{AngluinAE08}),
we might instead require an {\em approximate} solution that is required
to be correct only when the gap between the two counts is sufficiently
large. The goal is to identify a strategy that offers a reasonable trade-off between the gap
required for a fixed error bound.
We tackle this challenge with the below cell definition
which is parameterized with an upper bound $N$ on the maximum network size and a constant error bound $\epsilon > 0$:

\begin{center}
\begin{tabular}{|cc|}
\hline
\multicolumn{2}{|c|}{{\tt MajorityA}$(N,\alpha = \lceil 2\ln{(2/\epsilon)} \rceil)$ cell definition (for type $A$)} \\
\hline
\hline
$q_0 = 0$ & ${\cal B} = \{(f, (\alpha\log{N}, m_A))\}$, where: \\
$\lambda = 1$, $\sigma= 3\alpha\log{N}$ & \multicolumn{1}{c|}{$f(0 \leq x \leq \alpha \log{N}) = 2^{-  (\log{N} - \lfloor \frac{x}{\alpha} \rfloor) }$}\\
%\begin{itemize}
%	\item $f(x < 1/2) = 0$
%	\item $f(1/2 \leq x < 1) = 1/2$
%	\item $f(x \geq 1) = 1$
%\end{itemize} 
%
$g(|M_B| \geq 1) = -2\alpha\log{N}$ &  \multicolumn{1}{c|}{$f(x < 0) = 0$} \\
$g(|M_B| = 0) = 0$ & \multicolumn{1}{c|}{$f(x > \alpha\log{N}) = 1$}  \\
\hline
\multicolumn{2}{|c|}{event threshold: $3\alpha\log{N}$}\\
\hline
\multicolumn{2}{|c|}{($M_B$ equals the sub-multiset including only ligands of type $m_B$ sent from type $B$ cells.)}\\
\hline
\end{tabular}
\end{center}

% 2^{-(\log{N} - \lfloor \frac{x}{\alpha \log{N}} \rfloor)}	& \text{if $0 \leq x \leq \alpha \log^2{N}$,}

We prove the following regarding the trade-off between $\epsilon$ and the required size gap between the cell type counts,
which, roughly speaking, shows that for any constant error bound,
there is a constant size gap between $n_A$ and $n_B$ which will enable majority detection to work with that error.

\begin{theorem}
Fix some constant error bound $\epsilon > 0$ and upper bound $N>1$.
Let $\alpha = \lceil 2\ln{(2/\epsilon)} \rceil$.
The {\tt MajorityA}$(N,\alpha)$ and {\tt MajorityB}($N, \alpha)$ cell definitions, when executed in a system with $n_A$ and $n_B$ type $A$ and type $B$ cells, respectively,
where $n_A > n_B \cdot (\alpha 4)/\epsilon$ and $N \geq n_A + n_B$, guarantees with probability at least $1-\epsilon$: in the first $O(\log{n})$ rounds, a type $A$ expression event
will occur before any type $B$ event.
(The symmetric claim also holds for $n_B > n_A \cdot (\alpha 4)/\epsilon$.)
\end{theorem}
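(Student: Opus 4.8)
The plan is to reduce the whole statement to a single question: \emph{which cell type produces the first firing?} First I would prove a \textbf{lockstep lemma}. Starting from the common potential $q_0=0$ with equilibrium $3\alpha\log N \gg \alpha\log N$, the gradient adds exactly $+1$ each round, so until the first cell of \emph{any} type fires, every cell shares the potential equal to (round $-1$). Grouping rounds into \emph{phases} $j=0,1,\dots,\log N$, where phase $j$ is the block of $\alpha$ rounds in which this common potential lies in $[j\alpha,(j+1)\alpha)$, each cell therefore fires independently in each round of phase $j$ with probability $p_j = 2^{-(\log N - j)} = 2^{j}/N$. I would then prove a \textbf{commit-and-pin lemma}: if some $A$ cell fires in a round in which no $B$ cell fires, then (i) since the firing offset is $\alpha\log N$ and $f\equiv 1$ above $\alpha\log N$, that cell re-fires every subsequent round and crosses the expression threshold $3\alpha\log N$ within a constant number of rounds; and (ii) each such firing emits $m_A$, dropping every $B$ cell by $2\alpha\log N$ into strictly negative potential where $f=0$, and because this pulse repeats every round the $B$ cells stay pinned below the firing range for far longer than the $O(1)$ rounds the committed $A$ cell needs. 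Thus a \emph{clean} first $A$ firing forces an $A$ expression event strictly before any $B$ event (and symmetrically for $B$).

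The subtlety — and the step I expect to be the main obstacle — is that suppression couples the two types, so the independent-firing picture is only literally valid \emph{up to the first firing}. To make this rigorous I would introduce an idealized process in which each cell merely flips, in each round, an independent coin with phase probability $p_j$ and is never suppressed, and let $\rho_A,\rho_B$ be the first rounds an $A$-coin, respectively $B$-coin, comes up heads. Because the real execution agrees with these idealized coins in every round up to and including the first firing, we have $\min(r_A,r_B)=\min(\rho_A,\rho_B)$ and the \emph{identity of the type(s) firing first} is distributed exactly as in the idealized process. Combined with the commit-and-pin lemma, the error event is then contained in $\{\rho_B \le \rho_A\}$, together with the probability-zero event that no cell ever fires (impossible, since $p_j\to 1$). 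This reduces everything to a first-moment computation on genuinely independent coins.

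It remains to bound $\Pr[\rho_B \le \rho_A]$. Writing $T_A(j)=\alpha n_A\sum_{j'\le j}2^{j'}/N$ and $T_B(j)=(n_B/n_A)T_A(j)$ for the expected firing counts through phase $j$, I would take $R^{*}$ to be the last round of phase $j^{*}$, with $j^{*}$ the smallest index satisfying $T_A(j^{*})\ge \ln(2/\epsilon)$, and split
\[
\Pr[\rho_B \le \rho_A] \;\le\; \Pr[\rho_A > R^{*}] + \Pr[\rho_B \le R^{*}].
\]
For the first term, $\Pr[\rho_A>R^{*}]=\prod_{j\le j^{*}}(1-p_j)^{\alpha n_A}\le e^{-T_A(j^{*})}\le \epsilon/2$ via $1-x\le e^{-x}$; for the second, a union bound gives $\Pr[\rho_B\le R^{*}]\le T_B(j^{*})=(n_B/n_A)T_A(j^{*})$. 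Minimality of $j^{*}$ yields $T_A(j^{*})<3\ln(2/\epsilon)$ (from the identity $T_A(j^{*})=2T_A(j^{*}-1)+T_A(0)$ with both summands below $\ln(2/\epsilon)$), and the gap hypothesis $n_A/n_B>4\alpha/\epsilon\ge 8\ln(2/\epsilon)/\epsilon$ then forces $T_B(j^{*})<\epsilon/2$; the degenerate case $j^{*}=0$ (very large $n_A$) is handled separately using $n_A\le N$. Summing, the total error is $<\epsilon$. Finally $R^{*}\le(j^{*}+1)\alpha=O(\alpha\log N)$ and the committed cell expresses within $O(1)$ further rounds, so since $\alpha=O(1)$ for constant $\epsilon$ the event occurs within $O(\log N)=O(\log n)$ rounds; the symmetric claim follows by exchanging $A$ and $B$. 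I expect the probabilistic bounds here to be routine, with essentially all the genuine work living in the lockstep/coupling reduction and in verifying that one repeated $m_A$ pulse pins the losing type for long enough.
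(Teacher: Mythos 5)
Your proposal is correct and follows essentially the same route as the paper's proof: a phase-based lockstep analysis, a union bound showing no type-$B$ cell fires through a critical phase (with the geometric sum over earlier phases), and a $1-x\le e^{-x}$ bound showing some type-$A$ cell does fire by then, followed by the observation that the first clean firing locks in the win. The only differences are cosmetic --- you select the critical phase by cumulative expected firings rather than by the first phase with $p\ge 1/(2n_A)$, and you make explicit the coupling to independent coins and the ``commit-and-pin'' dynamics that the paper treats informally in its opening and closing sentences.
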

\begin{proof}
Fix some $\epsilon$, $\alpha= \lceil 2\ln{(2/\epsilon)} \rceil$, $N$, $n_A$ and $n_B$ such that $n_A > n_B \cdot (\alpha 4)/\epsilon$.
The argument for the case where $n_B$ is larger is symmetric.
In the absence of a bioelectric event firing,
the cells spend $\alpha$ rounds at each firing probability in the geometric series $1/N, 2/N, ..., 1/2$.
We call the $\alpha$ rounds spent at each probability a {\em phase}.
Consider the first phase associated with a probability $p \geq 1/(2n_A)$.
By our assumption, $p < 1/(\alpha q n_B)$, where $q=(4/\epsilon)$.

By a union bound, the probability that all $n_B$ cells of type $B$ are quiet during all $\alpha$ rounds of the phase associated with probability $p$
is less than $\alpha n_B p < 1/q$.
Also by a union bound, the probability that all type $B$ cells are silent in all phases up to and including this phase is upper bounded by:

\[ 1/q + 1/(2q) + 1/(4q) + ... + 1/N < 1/q\sum_{i=0}^{\infty} 2^{-i}  \leq 2/q = \epsilon/2. \]

Now consider the negative event where {\em no} type $A$ cell fires before the end of this phase.
This probability is upper bounded by:

\[(1-p)^{\alpha n_A} \leq e^{-p \alpha n_A} < e^{-  (\alpha n_A)/(2 n_A)  } = e^{-\alpha/2} \leq e^{\ln{(2/\epsilon)}} = \epsilon/2. \]

By a final union bound, the probability at least one of these two negative events occurs is less than $\epsilon$.
If {\em neither} of these events occur, a type $A$ cell 
fires before any type $B$ cell.
By the cell definitions,
this event will cause all type $B$ cells to drop to potential values below $0$ (where they have probability $0$ of firing),
and the type $A$ cell(s) that fired to increase their potential to a point where they will subsequently fire with
probability $1$, suppressing the type $B$ cells until they cross the type $A$ expression threshold.
\end{proof}

\section{Turing Completeness}
\label{sec:tm}
%%%%%%%%%%%%%%%%%%%%%%%%%%%%

In the previous section,
we considered a definition of information processing in which cells computed functions on the number of cells of specific types in the network.
Here we study another natural definition in which cells process information embedded into their initial state.
In particular, we consider the ability of cells to compute functions on the initial potential value of a designated {\em input cell} in the system.
This setup helps isolate a core question in studying bioelectrical systems: {\em What types of computations on cell state can be computed through simple
bioelectric interactions?} 
Here we prove a perhaps surprising answer: {\em Essentially all feasible computations.}
Formally:

\begin{theorem}
Fix an arbitrary deterministic TM $M$. 
There exists a finite collection of cells defined with respect to $M$,
including a designated {\em input cell},
such that for every TM input $w$,
if you set the input cell's initial potential value to a specified unary encoding of $w$,
the cells will correctly simulate $M$ on $w$.
\label{thm:tm}
\end{theorem}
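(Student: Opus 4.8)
The plan is to prove Turing completeness by a two-stage reduction, following the outline sketched in Section~\ref{sec:intro}. First I would invoke Minsky's classical result~\cite{minsky1967} to replace the given deterministic Turing machine $M$ with an equivalent counter machine $CM$: a finite-state control operating over a fixed number of integer counters, whose instructions are increment and test-and-decrement (i.e., jump-if-zero). Minsky's encoding places the Turing input $w$ into the starting value of a single designated counter, so it suffices to build a CBM system that (i) faithfully simulates $CM$ step for step, and (ii) admits a designated \emph{input cell} whose initial potential is exactly that starting counter value (matching the unary encoding in the statement). Since one admissible choice of $M$ is a universal machine, this also yields a computationally universal collection of cells.

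The construction assigns cells to the three syntactic components of $CM$. Each counter is represented by a single \emph{counter cell} whose potential equals a fixed scaling of the current counter value; to prevent spurious drift I would set its gradient rate $\lambda = 0$, so that its value is preserved across rounds and changes only in response to explicit increment or decrement ligands delivered through its membrane function. Each control state is represented by a \emph{state cell} that is \emph{latched}: ``active'' corresponds to a high potential that persists (again with $\lambda = 0$) until a deactivation ligand arrives, and while active the cell broadcasts a state-identifier ligand via a monotonic step-up firing function. Each instruction of $CM$ is represented by a \emph{transition cell} that becomes enabled precisely when its source state is active and the counter test it performs is satisfied.

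I would then impose a fixed \emph{phase} structure: one step of $CM$ is simulated by a constant number $c_0$ of synchronous CBM rounds, during which the active state cell and all counter cells broadcast their status, the unique enabled transition cell fires, and its firing simultaneously (a) emits an increment or decrement ligand to the affected counter cell, (b) emits an activation ligand to the successor state cell, and (c) emits a deactivation ligand to the current state cell. Because $CM$ is deterministic, exactly one transition is enabled per step, so exactly one transition cell fires per phase; an induction on the number of simulated steps then shows that the CBM configuration after $i \cdot c_0$ rounds encodes the configuration of $CM$ after $i$ steps. When the control reaches the halt state, its state cell triggers the designated expression event, which we take as the machine's output.

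The main obstacle is realizing the control logic---in particular the zero test---under the model's monotonicity constraint, while keeping every cell perfectly synchronized. A monotonic firing function cannot directly signal ``my value is zero,'' so instead each counter cell would broadcast a \emph{nonzero} ligand exactly when its potential exceeds the zero level (a monotonic step-up), and the \emph{absence} of this ligand would encode zero. A transition cell that tests a counter combines the relevant state ligand with the presence or absence of the nonzero ligand through its membrane function $g$, whose output is designed to cross the (monotonic) firing threshold only for the intended combination; this respects monotonicity of the firing function while pushing all the Boolean logic into $g$, which may be an arbitrary function of the received multiset. The delicate part is choosing the offsets $\delta$, the membrane functions, and the equilibrium and gradient parameters so that (a) counter potentials are corrupted neither by gradient drift nor by leftover offsets from previous rounds, (b) each phase returns the auxiliary cells to a clean baseline ready for the next step, and (c) all of this uses at most two bioelectric events and at most two ligand types per cell within a bounded binding constant. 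Since every firing function is a deterministic step function, the per-round trajectory is fully determined, so this reduces to a finite---if intricate---verification that the constants can be chosen to make each phase behave exactly as intended.
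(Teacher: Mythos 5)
Your proposal is correct and takes essentially the same route as the paper: reduce $M$ to a (well-formed) counter machine via Minsky, then simulate it with one counter cell per counter (potential equals count, $\lambda=0$), one cell per control state, and one cell per transition, with a constant-round phase structure and an induction on simulated steps. The only divergences are minor mechanism choices---the paper signals zero with an explicit $ZERO_i$ ligand via a monotonically \emph{decreasing} step function (Constraint~\#2 permits this, so your worry about signalling ``value is zero'' is unnecessary), and it makes the active state cell self-deactivate by giving its firing event offset $-1$ rather than latching it until a deactivation ligand arrives, which sidesteps the re-triggering/timing issues you correctly flag as the delicate part of your variant.
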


%\paragraph{Simulating Turing Machines with Counter Machines.}
A well-known result due to Minsky~\cite{minsky1967} is that a counter machine (CM) with a constant number of counters
can simulate an arbitrary deterministic Turing machine (TM). The simulation requires that the input $w$ is encoded into a unary value
that is stored as the initial value of a designated counter.
To prove our theorem, therefore, it is sufficient to describe a strategy for a finite number of cells connected in a fully-connected network
topology to simulate an arbitrary CM.  Our simulation includes one {\em counter cell} for each counter in the simulated machine.
The initial potential value of each counter cell is interpreted as the initial value of the counter in the simulated machine.

More generally, given a TM $M$ let $C'$ be the corresponding counter machine constructed using known results.
Without loss of generality, we can transform $C'$ into an equivalent {\em well-formed} counter machine $C$ (see definition below)
that satisfies some additional behavioral constraints that will simplify our implementation of $C$ in the CBM. 
This transformation increases the number of states by at most a constant factor.
To simulate $M$, therefore,
it is sufficient to simulate the corresponding well-formed CM $C$.
With this in mind, the rest of this section focuses on how to simulate well-formed counter machines using CBM cells.
%Our simulation requires the maximum number of bioelectric events required for a given cell to grow with the total number of counters
%in the system. Therefore, in order to satisfy the CBM constraint, our strategy works only for CM's with a constant number of counters.
%Fortunately, to simulate a TM with a CM requires only a small constant number of counters (TM simulation is possible with as few as $2$ counters,
%though allowing for $3$ or $4$ counters enables significantly simpler CM constructions).

\paragraph{Simulating Counter Machines with CBM Cells.}
For a given well-formed CM $C$,
our simulation strategy leverages the following cell types:
{\em counter}, {\em state}, and {\em transition} cells. The simulation requires one counter cell for each counter in the simulated
machine, one state cell for each state in the machine's finite state control, and one transition cell for each transition.
As detailed below, our cell definitions work for a minimum binding bound of $1$, 
and no cell definition includes more than $2$ bioelectric events or requires its membrane function to react to more than $2$ different ligand types.
All firing functions used in these cells are simple deterministic step functions.

The basic dynamic of the simulation is that the counter cells each announce with a designated ligand whether their current value is $0$ or greater than $0$.
The state cell corresponding to the current state of the simulated machine also announces that it is the current state with a designated ligand (while other state
cells remain quiet).
Transitions for a well-formed CM are defined with respect to the current state and the count status of a single specified counter.
For a given transition cell defined with respect to state $q$, counter $c_i$, and status $s_i\in \{0, >0\}$ of $c_i$, 
we define its membrane function so that if it receives an active ligand from the state $q$ cell,
and the ligand corresponding to status $s_i$ from the counter cell corresponding to $c_i$,
then its potential increases to a point that triggers two events: one that transmits a ligand that 
actives the state cell corresponding to the new state of the simulated machine,
and one that transmits a ligand that implements a counter increment or decrement 
when received by the corresponding counter cell.
At this point, having activated a new state and adjusted a counter value,
the transition cell falls back to its initial potential and the system is ready to simulate a new step. 

%A subtlety in this simulation is that it takes two rounds to simulate each step. In order
%for an activated transition cell to fall back to its initial potential,
%it needs an extra round in which counter cells do not send their ligands (which can inadvertently increase a transition cell's potential value while
%it is trying to decrease it back to an initial value).

%With this in mind, we introduce a {status} cell for each counter cell.
%If the status cell is in its initial state, receiving a ligand from its counter will increase its potential to a point where during the
%next round, it will echo the counter's status and then fall back to its initial state. This reduces the rate of reporting of
%counter statuses to once every other round, as required by the simulation.

We detail this general strategy below, starting with our definition of well-formed counter machines, and then moving on
to the detailed definitions of each of the summarized cell types.

\paragraph{Well-Formed Counter Machines.}
Give some counter machine $C'$, we can transform it, without loss of generality, into 
an equivalent (in terms of its final outputs) {\em well-formed} machine $C$ that satisfies the following properties:

\begin{enumerate}
 \item The transitions defined for each state $q$,
          are defined with respect to the $\{0, >0\}$ status of exactly one counter,
          or a wildcard $*$ that specifies the transition should occur regardless of any particular counter statuses.
 \item Each transition performs a single operation (increment or decrement) on a single counter in addition to transitioning
 the machine to a new state. 
  \item The transitions are complete and unambiguous: regardless of the counter statuses and current state,
 exactly one transition is enabled at any given step.
 \item The machine never decrements a counter with value $0$.

 \end{enumerate}

\paragraph{The Counter Cell Type.}
We define one counter cell $C_i$ for each counter $c_i$ from the well-formed counter machine we are simulating.
The potential value of a cell represents the  value of the counter it is simulating,
accordingly, we initialize $C_i$'s potential to the appropriate initial count value for $c_i$.
We also set
the cell's gradient rate $\lambda \gets 0$, meaning that it
has no background drift impacting its potential value (as would be the case, for example, if there were no ion channels
enabling ion flux between the cell and the extracellular environment).
The cell has
a bioelectric event that sends a $ZERO_i$ ligand with probability $1$ if its potential is less than or equal to $0$,
and a bioelectric event that sends a $NONZERO_i$ ligand with probability $1$ if its potential is greater than $0$.

The membrane function for $C_i$ considers only the counts of two special ligands: $INC_i$ and $DEC_i$.
If at least one $INC_i$ ligand is received, it increments the potential by $1$. If at least one $DEC_i$ ligand is received,
it decrements the potential by $1$. (As will be made clear, $C_i$ can never receive both $INC$ and $DEC$
ligands in the same round, so any behavior for this case is fine).

\paragraph{The State Cell Type.}
We define one state cell $Q_i$ for each state $q_i$. 
Each cell has a gradient rate $\lambda = 0$.
The cell corresponding to the start state is initialized with potential $1$,
all other state cells are initialized with potential $0$.
The cell's definition includes a single bioelectric event that sends a $STATE_i$ ligand with probability $1$ if the cell's potential is at least $1$.
If this event fires, it also decreases the cell's potential by $1$.
The membrane function for the cell considers only the $ASTATE_i$ ligand.
If it receives any ligands of this type, it increases the cell's potential by $1$.

\paragraph{The Transition Cell Type.}
A well-formed counter machine transition $t_i$ is defined for a counter state $q_j$, 
a counter $c_k$, and a status for this counter $s_k\in \{0, >0\}$.
This transition is enabled if the machine matches this $(q_j, c_k, s_k)$ transition precondition,
meaning that the machine is currently in state $q_j$ and the status of $c_k$ is $s_k$.\footnote{As defined earlier, a
 well-formed counter machine can also include a transition of the form $(q_j, *)$ meaning
 that the transition is enabled if the the machine is in state $q_j$ regardless of any counter status.
 We can replicate this case with two transition preconditions, $(q_j, c_1, 0)$ and $(q_j, c_1, >0)$
 that map to the same behavior as the original $(q_j, *)$ transition. This has the same effect
 as exactly one of those two preconditions must be enabled if the machine is in state $q_j$, 
 as counter $c_1$---as with any counter---must either have status $0$ or $>0$, but not both.}
Recall from the definition of well formed that at most one transition can be enabled at any given step.

Fix some such transition for $t_i$ with precondition $(q_j, c_k, s_k)$.
Let $(q_{j'}, c_{k'}, a_{k'})$, for state $q_{j'}$, counter $c_{k'}$, and action $a_{k'} \in \{INC_{k'}, DEC_{k'}\}$,
be the transition result. That is, if enabled, this transition shifts the machine to state $q_{j'}$ and executes action
$a_{k'}$ on counter $c_{k'}$. 
We define a transition cell $T_i$ for $t_i$ as follows.

The cell has initial potential $0$, gradient rate $\lambda = 1$, and equilibrium $\sigma = 0$.
If $s_k = 0$, then its membrane function increases the cell's potential by $1$ if it receives a $STATE_j$ and $ZERO_k$ ligand.
If $s_k  >0$, then its membrane function increases the cell's potential by $1$ if it receives a $STATE_J$ and $NONZERO_k$ ligand.
These are the only multisets that the function maps to a potential change.

The cell definition includes two bioelectric events.
The first sends a $ASTATE_{j'}$ ligand with probability $1$ if the cell's potential is at least $1$. 
The second depends on $a_{k'}$.
If $a_{k'} = INC_{k'}$ m then the second event sends $INC_{k'}$ ligand with probability $1$ if the cell's potential is at least $1$,
otherwise it sends $DEC_{k'}$.
Neither event adjusts the cell's potential.
But because $\lambda = 1$ and $\sigma = 0$,
if the cell starts the round at potential $1$, it will drift back to $0$ at the end of the round as it drives its potential toward equilibrium.

\paragraph{Pulling Together the Pieces.}
To understand the operation of this simulation strategy, it is helpful to divide the rounds into alternating $a$ and $b$ types,
starting with type $a$. Each consecutive pair of $a$ and $b$ rounds simulates one round of the counter machine.

In more detail, at the start of the first $a$ round, all transition cells are resting at potential $0$.
The only state cell at potential $1$ is the cell corresponding to the initial state $q_0$.
This cell will send a $STATE_0$ ligand and reduce its potential back to $0$.
Each counter cell $c_i$ will also send a $ZERO_i$ or $NONZERO_i$ ligand, depending on its potential value.
The transition cell $T_j$ corresponding to the unique enabled transition for this round will
be the only transition cell to receive the proper combination of state and counter ligands to increase
its potential by $1$. 

At the beginning of the $b$ round that follows, the only cells that will send ligands are the counter cells
and the transition cell $T_j$.
The ligands from the counter cells will be effectively ignored by the membrane functions of the transition cells in this round as they do not
come accompanied by any state ligands.
The transition cell $T_j$, by contrast,
will send a ligand $ASTATE_k$, where $q_k$ is the new state in the result for transition $t_j$,
and $INC_{\ell}$ or $DEC_{\ell}$ ligand, corresponding to the counter operation on some counter $c_{\ell}$ specified in the result for $t_j$.
Because $\sigma=0$ and $\lambda=1$, $T_j$ will conclude the round by having its potential reduce back
down to $0$.
The $INC_{\ell}$ or $DEC_{\ell}$ ligands will have the effect of properly updating the potential value of counter $c_{\ell}$.
The $ASTATE_k$  ligand will increase the potential of the state cell corresponding to $q_k$ to $1$.

When we start the subsequent $a$ round, we are back in a configuration where all transition
cells have potential $0$, and the only state cell with a potential $1$ is the state cell corresponding
to the simulated machine's new state. 
The simulation of this next round continues as above.

%%%%%%%%%%%%%%%%%%%%%%%%%%%%
%%%%%%%%%%%%%%%%%%%%%%%%%%%%
%%%%%%%%%%%%%%%%%%%%%%%%%%%%
%%%%%%%%%%%%%%%%%%%%%%%%%%%%
%%%%%%%%%%%%%%%%%%%%%%%%%%%%
\iffalse
\section{Computability Comparison to Discrete State Machines}

There is some similarity between the efficient MIS algorithm and some efficient solutions 
from standard distributed graph algorithm models with discrete states. 
Are these two the same models and we are just obfuscating with these potentials?

I think it is important to prove that this is not the case. Here, therefore, we can formalize
the results sketched in the older version of this document that shows these are incomparable.
That is, there are things that a machine with a constant number of states can do that
a cell cannot do (even with many different potentials)---I am talking, of course, about repeating patterns
with at least three symbols. 

Then, of course, there's the other direction, you can do things with cells you cant with state machines.
In particular, counting.

\fi

%%%%%%%%%%%%%%%%%%%%%%%%%%%%%%%%%%%%%%%%%%%%%%%%%%%%%%%%%%%%%%%%%%%%%%%%%
\section{Acknowledgments}
This work was support in part by NSF award \#1649484.
%%%%%%%%%%%%%%%%%%%%%%%%%%%%%%%%%%%%%%%%%%%%%%%%%%%%%%%%%%%%%%%%%%%%%%%%%
%%%%%%%%%%%%%%%%%%%%%%%%%%%%%%%%%%%%%%%%%%%%%%%%%%%%%%%%%%%%%%%%%%%%%%%%%
%%%%%%%%%%%%%%%%%%%%%%%%%%%%%%%%%%%%%%%%%%%%%%%%%%%%%%%%%%%%%%%%%%%%%%%%%

%
\bibliographystyle{plain}
\bibliography{bio,bib}
\end{document}